\newtheorem{theorem}{Theorem}[section]
\newtheorem{proposition}[theorem]{Proposition}
\newtheorem{remark}{Remark}
\newtheorem{definition}[theorem]{Definition}
\journal{}
\begin{document}
\begin{frontmatter}

\title{A Metapopulation Model for Chikungunya Including Populations Mobility on a Large-Scale Network}

\author{Djamila Moulay\corref{cor1}}
\ead{djamila.moulay@univ-lehavre.fr}
\ead[url]{http://djamila-moulay.org}
\cortext[cor1]{Corresponding author. LMAH, Universit\'e du Havre, 25 rue Philippe Lebon, BP540, 76058 Le Havre Cedex}
\address{LMAH, Le Havre University, France}

\author{Yoann Pign\'e}
\ead{yoann.pigne@univ-lehavre.fr}
\ead[url]{http://pigne.org}
\address{LITIS, Le Havre University, France}

\begin{abstract}
In this work we study the influence of populations mobility on the spread of a vector-borne disease. We focus on the chikungunya epidemic event that occurred in 2005-2006 on the R\'eunion Island, Indian Ocean, France, and validate our models with real epidemic data from the event. We propose a metapopulation model to represent both a high-resolution patch model of the island with realistic population densities and also mobility models for humans (based on real-motion data) and mosquitoes. In this metapopulation network, two models are coupled: one for the dynamics of the mosquito population and one for the transmission of the disease. A high-resolution numerical model is created out from real geographical, demographical and mobility data. The Island is modeled with an 18,000-nodes metapopulation network. Numerical results show the impact of the geographical environment and populations' mobility on the spread of the disease. The model is finally validated against real epidemic data from the R\'eunion event.
\end{abstract}

\begin{keyword}
Complex Networks \sep Epidemiological model \sep  Human mobility models    \sep Metapopulation


\end{keyword}

\end{frontmatter}
\section{Introduction}

Many factors have influence the emergence and re-emergence of vector-borne diseases \cite{gratz_emerging_1999, Sutherst2004}. Brutal changes in natural habitats so as massive or recurrent population migrations tend to speed up the spread of vector-borne diseases. The spatiotemporal evolution of such diseases is becoming a key issue for epidemiologists. To this purpose, models that consider the spatial distribution of a natural environment are of great interest. In this paper we are interested in the spatial spread of  a vector-borne disease under the effect of human and vector mobility.
Indeed, human migration is one of the factor that have influenced the re-emergence of several diseases \cite{national2006The, Martens2000}. 
The modeling of geographical environments and populations mobilities are becoming mandatory in this context. 
 There are several approaches to describe such spread. One of the typical approaches, introducing spatial spread variation in epidemic models, involves the use of partial differential equations \cite{Bailey1980,Murray2002a,Fitzgibbon2003}. 
However, in the case of human mobility these approaches may not be appropriate.   
The theory of "metapopulation", first introduced in 1969 \cite{Levins1969}, in the field of ecology, allows such modeling. 
Several researches have been devoted to the study of disease spread in heterogenous environments \cite{dushoff_effects_1995, Wang2004,lloyd_spatial_1996}.
In \cite{Colizza2008,Cosner2009}, authors study the influence of human dispersal among $n$ patches in the dynamics of disease spread. In \cite{Balcan2010}, authors, to study the spread of seasonal influenza, focus on air displacements and model the environment with  a network where nodes are airports and edges represent flights. In \cite{LonginiJr.1988} and \cite{Rvachev1985}, authors are also interested in the case of  influenza and mobility in terms of long journeys.
In \cite{Van2003,Arino2009,ArinoJune2005}, the authors propose a virus spread model based on this theory. In \cite{Menach2005,Smith2004}, the mobility of mosquitoes has been modeled to study the spatiotemporal dynamics of malaria. Other studies focus on direct-transmission diseases like \cite{ArinoJune2005,Van2003}. In \cite{Balcan2010}, influenza in the case of long trips (aircraft flights) is tackled. In  \cite{Auger2008}, the authors rely on the Ross-Macdonald model \cite{Ross1910,L.W.Hackett09011958}, to consider human mobility. 

Human mobility is usually considered in epidemic problems. Also, models that consider the dynamics of the vector population are numerous. However, to our knowledge, no  model considers to couple population dynamics with populations mobility as we propose here.  Moreover our approach promotes the consideration of vectors's mobility since the resolution of our model is greater than usual models.

In this work we are proposing to couple two models (published in \cite{Moulay2010}): a mosquito dynamics model (growth and evolution of the population) with a transmission virus model between two populations (humans and mosquitoes). The dynamics of the mosquito population is described by a stage structured  model based on its biological life cycle. The different compartment states are egg, larva, pupa and adult. 
The disease transmission model is, for the human population, a SIR (Susceptible, Infected, Recovered) model. For adult mosquitoes, the transmission model is a SI (Susceptible, Infected) model. 

Those two models are formalized using the metapopulation theory. It considers a network where nodes represent real habitats of the environment. In each of these nodes, transmission and population dynamics models appear and are coupled with neighbor nodes. Links in the network represent both the local neighborhood of a node and  farther nodes that code for the mobility of humans. 

We focus on a real case of chikungunya epidemic with the 2005-2006 event that occurred on the R\'eunion Island, a French island in the Indian Ocean. The island is modeled with a network. Since we want that network to reflect the local population's density, we consider the road network of the island as a proxy to the human density, considering that each crossroad is a node of the network. Then the local population on each node is adjusted according to real data given by the French Institute for Statistics (INSEE). Finally the all island is modeled with a 18,000 nodes network. 

In \cite{Gonzalez:2008fk}, authors propose various distributions that match a data set of real human mobility patterns (cell phone probes). We rely on these distributions to create human mobility patterns for the population in our model. We assume that individuals only change disease status when they are on a node and not during displacements. See \cite{Cui2006} for a model with disease transmission during travel.

Results show the decisive influence of mobilities over the spread of the disease. Not only the human mobility but also the vector local interaction that play an important role at the considered scale. Results are then compared to real data epidemic  regarding the 2005-2006 event at the scale of the all R\'eunion Island and the model is validated. 

The remaining of this paper is organized as follows. Next section recalls original transmission and population dynamics models that this work is inspired  of. Section \ref{sec:metapopulation_model} introduces the original metapopulation model that is able to include the previous models in a network of patches, linked to represent populations mobility. Then, in Sect. \ref{sec:application}, according to the wish to validate the model against a real epidemic, a numerical implementation of the metapopulation model is constructed. This section details the construction of the metapopulation network in terms of environment and populations mobility. Section \ref{sec:results} presents various analysis performed on the model, showing the impact of mobilities. A validation is then proposed with a comparison with real epidemic data. Then a stochastic analysis of the model is proposed to study the robustness of the system. Finally Sect. \ref{sec:conclusion} concludes that paper. 

\section{Original Model} 
\label{sec:original_model}


We first recall the model proposed and studied in \cite{Moulay2010} describing the vector dynamics. The formulated model is a stage structured model based on the biological mosquito life cycle. The vector population is subdivided into several classes: the aquatic stagese consisting in eggs $E$ and larvae/pupae $L$, and then the adult stage $A$, representing adult females.
We assume first  that the number of eggs $b$, laid by females, is proportional to the number of females itself. Secondly,  the number of eggs and larvae is regulated by the effect of a carrying capacity $K_E$ and $K_L$, respectively.
Then we formulate a transmission virus model where the adult female stage $A$ is  subdivided into two epidemiological states: susceptible $S_m$ and infectious $I_m$.
It assumes that there is no vertical transmission of the virus, so that births from susceptible and infectious mosquitos occur into the egg stage with the same rate. 

%
The human population consists in three epidemiological states: susceptible (or non-immune) ${S}_H$, infectious ${I}_H$, and removed (or immune) ${R}_H$. 
It is assumed that there is no vertical transmission of the disease, so that all births occur into the susceptible human class, at the rate  $b_H > 0$.
Moreover we assume that the total human population $N_H$ remains constant, thus birth and death rates are equal.  
An infected human is infectious during $1/\gamma_h$ days, called the viremic period, and then becomes resistant or immune.

Forces of infections used in the model, which describe the rates of apparition of new infections, are standard and  modeled by the mass-action principle normalized by the total population of humans, given by  $ \beta_m {I}_H(t) S_m(t)/N_H $ and  $ \beta_H {I}_m(t){S}_H(t)/{N_H} $ where $\beta_m$ and $\beta_H$ are the transmission parameters.

This hypothesis are summed up in Fig. \ref{cycle_schema}.
\begin{figure}[htb]
\centering
\includegraphics[height=0.6\linewidth]{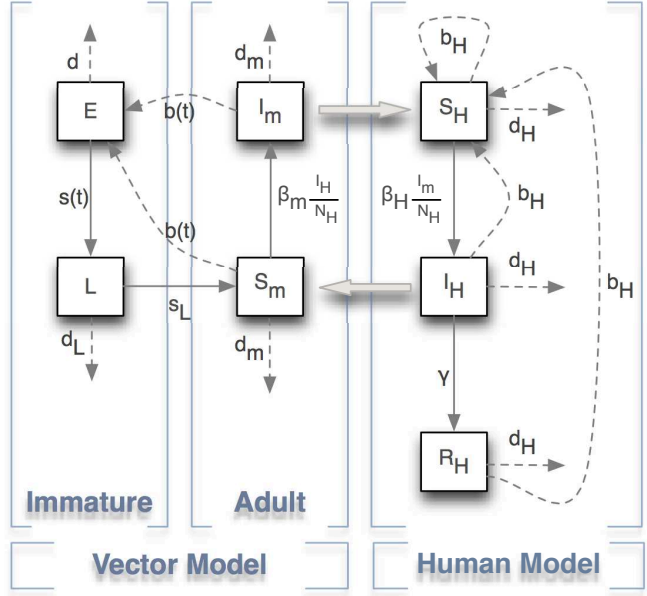}
\caption {Compartmental model for the dynamics of \textit{Aedes albopictus} mosquitos and the virus transmission to human population. 
 $b(t)=bA(t)\left(1-E(t)/K_E\right)$, $s(t)=sE(t)\left(1-L(t)/K_L\right)$ and the parameters of the model are given in table \ref{tableau1}.
}
 \label{cycle_schema}
 \end{figure}

\begin{table}

\caption{Parameters of the mosquito and transmission dynamics models. Most of the values were obtained from other publications in the field (\cite{Bacaer2007,Delatte2009,Dumont2010,Moulay2012}).} \label{tableau1} 
 \centering
 {\footnotesize
 \begin{tabular}{llr}
  \hline 
 Parameters & Description & Value\\
 \hline 
 $b$ & oviposition rate & $6.0$\\
 $K_E$&  egg carrying capacity & $1000$\\
 $K_L$ & larva carrying capacity & $K_E/2$\\
 $s$ & transfer rate $E\rightarrow L$ & $1/3$\\
 $s_L$ & transfer rate $L\rightarrow A$ & $1/10$\\
 $d$& egg mortality rate & $1/3$\\
 $d_L$& larva mortality rate & $1/3$\\
 $d_m$ & female adult mortality rate & $1/14$\\
 $b_H$ & human birth/death rate & $1/(78*365)$\\
 $\beta_H$& infection rate : vector to human  & See Sect. \ref{sub:parameters_analysis}\\ 
 $\beta_m$ & infection rate :  human to vector  & See Sect. \ref{sub:parameters_analysis}\\ 
 $\gamma_H$ & human recovery rate & $1/7$\\
 \hline\\
 \end{tabular}
 }
 \end{table}

Based on our model description (see Fig. \ref{cycle_schema}) and assumptions, we establish the following equations.
\begin{align}
  \label{transmission8}
  \left\{
    \begin{array}{rcl}
      \dfrac{dE}{dt}(t)&=&\displaystyle bA(t)\left(1-\dfrac{E(t)}{K_E}\right)-(s+d)E(t)\\
      \dfrac{dL}{dt}(t)&=&\displaystyle s E(t)\left(1-\dfrac{L(t)}{K_L}\right)-(s_L+d_L) L(t)\\
      \dfrac{dA}{dt}(t)&=&\displaystyle s_L L(t)-d_m A(t)\\
      \dfrac{d{S}_m}{dt}(t)&=&\displaystyle s_L L(t)-d_m {S}_m(t)-\beta_m \dfrac{{I}_H(t)}{N_H} {S}_m(t)\\
      \dfrac{d{I}_m}{dt}(t)&=&\displaystyle\beta_m \dfrac{{I}_H(t)}{N_H} {S}_m(t) -d_m{I}_m(t)\\
     \dfrac{d {S}_H}{dt}(t)&=&\displaystyle -\beta_H \dfrac{{I}_m(t)}{N_H} {S}_H(t)+b_H({S}_H(t)+{I}_H(t)+{R}_H(t))-d_{H}S_H(t)\\
     \dfrac{d {I}_H(t)}{dt}&=&\displaystyle\beta_H \dfrac{{I}_m(t)}{N_H} {S}_H(t)-\gamma {I}_H(t)-d_{H}{I}_H(t)\\
     \dfrac{d{R}_H(t)}{dt}&=&\displaystyle\gamma {I}_H(t)-d_{H}R_H(t)
    \end{array}
  \right.
\end{align}

The study of the model is detailed in \cite{Moulay2010}. For this model \eqref{transmission8}, the second basic reproduction number is given by 
$$R_0
   =  \dfrac{\beta_m \beta_H}{d_m (\gamma +b_H)}\dfrac{1}{N_H}\left(1-\dfrac{1}{r}\right)\dfrac{sK_Es_LK_L}
{d_m\big(sK_E+(s_L+d_L)K_L\big)},$$ where $\displaystyle r=({b}/(s+d))({s}/(s_L+d_L))({s_L}/{d_m}).$ 

The threshold $r$ governs the dynamics of mosquitoes. In this article we assume that $r>1$. This ensures the existence, persistence and global stability of a unique endemic equilibrium $(E^*,L^*,A^*)$, which corresponds to the biological and interesting case. The second reproduction number $R_0$ governs the dynamic of the transmission model.

\section{Metapopulation Model} 
\label{sec:metapopulation_model}

In \cite{Arino2009,ArinoJune2005,Van2003} J. Arino \emph{et al.} formulate a general system of differential equations allowing to describe human mobility. In this model, they identify each population by its origin and its present location. We rely on this model and extend it with the definition of a neighborhood in which humans and mosquitoes interact.

\subsection{Human mobility}

Based on the model given in \cite{Arino2009,ArinoJune2005,Van2003}, we proposed an extension of our previous model \cite{Moulay2010}, to describe the spread of a chikungunya under human and vector mobility on a large scale network.

Assume that the number of nodes in the network is $n$. A human population is identified due to two characteristics: the node from which he is originated, \textit{i.e.} its residence and the node where he is at time $t$. We assume that the total human population is constant,  \textit{i.e.} births and deaths occur with the same rate $b_H=d_H$. Moreover, we suppose that birth occur in the resident node while deaths take place in any node the human is present.


Let us denote by ${S_H}_{ij}(t), {I_H}_{ij}(t)$ and ${R_H}_{ij}(t)$ the susceptible, infected and removed human populations originated from node $i$ and present on node $j$ at time $t$.
We  denote by ${S_m}_i$ and ${I_H}_i$ the susceptible and infected mosquito present in node $i$. In a first assumption  mosquitoes mobility is neglected, which looks like realistic compared to the distance of humans displacements. So, resident mosquitoes from node $i$ are also present on this node.

The total number of susceptible, infected and removed human residents of node $i$, 
 is given respectively by 
${S_H}_i^r=\sum_{j=1}^{n} {S_H}_{ij}$, ${I_H}_i^r=\sum_{j=1}^{n} {I_H}_{ij}$ and  ${R_H}_i^r=\sum_{j=1}^{n} {R_H}_{ij}.$\\
The total number of susceptible, infected and removed humans present on node $i$,
is given respectively by
${S_H}_i^p=\sum_{j=1}^{n} {S_H}_{ji}$, ${I_H}_i^p=\sum_{j=1}^{n} {I_H}_{ji}$ and ${R_H}_i^p=\sum_{j=1}^{n} {R_H}_{ji}.$\\
The number of human residents on node $i$ is then equal to ${N_H}_i^r={S_H}_{i}^r+{I_H}_{i}^r+{R_H}_{i}^r$ and the human population present on node $i$ is ${N_H}_i^p= {S_H}_i^p+{I_H}_i^p+{R_H}_i^p$. \\

As in \cite{Sattenspiel1988} and \cite{Van2003}, we define the travel rate from node $i$ to node $j$ by $g_i m_{ji}$, where $g_i\geq 0$ corresponds to the per capita rate at which residents of node $i$ leave this node and a fraction  $m_{ji}\geq 0$ of them go to node $j$, with $m_{ii}=0$. Residents of node $i$, present on node $j$, then return to node $i$ with a per capita rate $r_{ij}\geq 0$, with $r_{ii}=0$. 
Displacements between two nodes are represented in Fig. \ref{fig:schema_deplacement}. 
\begin{figure}[ht!] \centering 
	\includegraphics[width=0.7\linewidth]{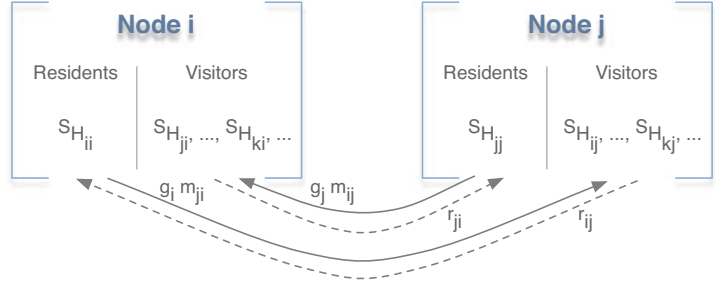}
	\caption{Human population mobility between nodes $i$ and $j$.}\label{fig:schema_deplacement} 
\end{figure}

The dynamics of human populations (susceptible, infected and removed) resident and present on node $i$ is given by:
\begin{eqnarray*}
\dfrac{d{S_H}_{ii}}{dt}&=&d_H( {N_H}_i^r- {S_H}_{ii})-g_i {S_H}_{ii}+\sum_{k=1}^n r_{ik} {S_H}_{ik}- {{\beta}_H}_i \dfrac{Im_{i}}{{N_H}_i^p} {S_H}_{ii}\\[-3mm]
  \dfrac{d{I_H}_{ii}}{dt}&=&- d_H {I_H}_{ii}-g_i {I_H}_{ii}+\sum_{k=1}^n r_{ik} {I_H}_{ik}+{{\beta}_H}_i \dfrac{Im_{i}}{{N_H}_i^p} {S_H}_{ii}- \gamma_H {I_H}_{ii}\\[-3mm]
 \dfrac{d{R_H}_{ii}}{dt}&=&- d_H {R_H}_{ii}+\gamma_H {I_H}_{ii} -g_i {R_H}_{ii}+\sum_{k=1}^n r_{ik} {R_H}_{ik}
\end{eqnarray*}

The dynamics of human populations (susceptible, infected and removed) resident on node $i$  and present on node $j$ is given by:
\begin{eqnarray*}
 \dfrac{d{S_H}_{ij}}{dt}&=& -d_H {S_H}_{ij}+g_i m_{ji} {S_H}_{ii}-r_{ij} {S_H}_{ij}- {{\beta}_H}_j \dfrac{Im_{j}}{{N_H}_j^p} {S_H}_{ij}\\[-2mm]
  \dfrac{d{I_H}_{ij}}{dt}&=&- d_H {I_H}_{ij}+g_i m_{ji} {I_H}_{ii}-r_{ij} {I_H}_{ij}+{{\beta}_H}_j \dfrac{Im_{j}}{{N_H}_j^p} {S_H}_{ij}- \gamma_H {I_H}_{ij}\\[-2mm]
 \dfrac{d{R_H}_{ij}}{dt}&=&- d_H {R_H}_{ij}+g_i m_{ji} {R_H}_{ii}- r_{ij} {R_H}_{ij}+\gamma_H {I_H}_{ij}.
\end{eqnarray*}

The dynamic of mosquito populations (eggs, larvae and pupae, adult females) given in \cite{Moulay2010}  in each node is:
 \begin{subequations}
  \label{eq_mous}
\begin{align}
&\dfrac{d{S_m}_{i}}{dt}= s_L L_i- {d_m}{S_m}_{i}-  {\beta_m}_i \dfrac{{S_m}_{i}}{{N_H}_i^p} {I_H}_{i}^p\label{eq_mous_a}\\[-1mm]
 &\dfrac{d{I_m}_{i}}{dt} ={\beta_m}_i \dfrac{{S_m}_{i}}{{N_H}_i^p} {I_H}_{i}^p-{d_m} {I_m}_{i}\label{eq_mous_b}\\[-1mm]
&\dfrac{dE_i}{dt}=b({S_m}_i(t)+{I_m}_i(t))\left(1-\dfrac{E_i(t)}{{K_E}_i}\right)-(s+d)E_i(t)\label{eq_mous_c}\\[-1mm]
&\dfrac{dL_i}{dt}=s E_i(t)\left(1-\dfrac{L_i(t)}{{K_L}_i}\right)-(s_L+d_L)L_i(t)\label{eq_mous_d},
\end{align}
\end{subequations}
where the dynamic of immature stages $E$ and $L$ is described by corresponding equations in system \eqref{transmission8}.


\begin{remark}
\begin{enumerate}[i.]
\item Note that the infection transmitted in node $i$ between susceptible mosquitoes and infected humans depends now on the human population present on this node ${N_H}_i^p$:
$\displaystyle\sum_{j=1}^n {\beta_m}_i \dfrac{{S_m}_{i}}{{N_H}_i^p} {I_H}_{ji}={\beta_m}_i \dfrac{{S_m}_{i}}{{N_H}_i^p} {I_H}_{i}^p.$
\item The distance between nodes is not explicitly taken into account, nevertheless it is implicitly included in the coefficient $m_{ji}$ and $r_{ij}$.
\item In the following we focus on daily displacements, humans who leave their resident node, obviously return to their resident node on a daily basis. Displacement matrices $M^T=[g_i m_{ji}]$ and $R=[r_{ij}]$ have the the same zero/nonzero pattern.
\end{enumerate}
\end{remark}
The human mobility model and  the virus transmission  dynamics is given by system \eqref{eq_t}.
\begin{subequations}
\label{eq_t}
 \begin{align}
 &\dfrac{d{S_H}_{ii}}{dt}=\!d_H({N_H}_i^r\!-\! {S_H}_{ii})\!-\!g_i {S_H}_{ii}\!+\!\sum_{k=1}^n r_{ik} {S_H}_{ik}\!-\!{{\beta}_H}_i\dfrac{Im_{i}}{{N_H}_i^p} {S_H}_{ii}\label{eq_t_a}\\
 & \dfrac{d{S_H}_{ij}}{dt}= \!g_i m_{ji} {S_H}_{ii}\!-\!d_H {S_H}_{ij}-r_{ij} {S_H}_{ij}\!-\! {{\beta}_H}_j \dfrac{Im_{j}}{{N_H}_j^p} {S_H}_{ij}\label{eq_t_b}\\
 &\dfrac{d{I_H}_{ii}}{dt}=\!- d_H {I_H}_{ii}\!-\!g_i {I_H}_{ii}\!+\!\sum_{k=1}^n r_{ik} {I_H}_{ik}\!+\!{{\beta}_H}_i\dfrac{Im_{i}}{{N_H}_i^p} {S_H}_{ii}\!-\! \gamma_H {I_H}_{ii}\label{eq_t_c}\\
  &\dfrac{d{I_H}_{ij}}{dt}=\!g_i m_{ji} {I_H}_{ii}\!-\! d_H {I_H}_{ij}\!-\!r_{ij} {I_H}_{ij}\!+\! {{\beta}_H}_j\dfrac{Im_{j}}{{N_H}_j^p} {S_H}_{ij}\!-\! \gamma_H {I_H}_{ij}\label{eq_t_d}\\
 &\dfrac{d{R_H}_{ii}}{dt}=\!\gamma_H {I_H}_{ii}\! -\! d_H {R_H}_{ii}-g_i {R_H}_{ii}\!+\!\sum_{k=1}^n r_{ik} {R_H}_{ik}\label{eq_t_e}\\
 &\dfrac{d{R_H}_{ij}}{dt}=\!g_i m_{ji} {R_H}_{ii}\!+\!\gamma_H {I_H}_{ij} \!-\! d_H {R_H}_{ij}\!-\! r_{ij} {R_H}_{ij} \label{eq_t_f}\\
 &\dfrac{d{S_m}_{i}}{dt}= \!s_L L_i^*\!-\! {d_m}{S_m}_{i}\!-\! \displaystyle\sum_{j=1}^n {\beta_m}_i \dfrac{{S_m}_{i}{I_H}_{ji}}{{N_H}_i^p}\label{eq_t_g}\\
 &\dfrac{d{I_m}_{i}}{dt}=\displaystyle\sum_{j=1}^n {\beta_m}_i \dfrac{{S_m}_{i}{I_H}_{ji}}{{N_H}_i^p}\! -\!{d_m} {I_m}_{i}\label{eq_t_h} 
 \end{align}
 \end{subequations}
where  $L_i^*=\left(1-\dfrac{1}{r}\right)\dfrac{{K_L}_i}{{\gamma_L}_i}$ and $\displaystyle {\gamma_L}_i=1+\dfrac{(s_L+d_L){K_L}_i}{s{K_E}_i}$ is given by the endemic equilibrium of the vector population dynamic model. 
In this model, each node is described by $(3n+2)n$ equations.

\subsubsection{Equilibrium of the model}

\begin{proposition}
\label{positivite}
The nonnegative orthant  $\mathbb{R_+}^{(3n+2)n}$ is positively invariant under the flow of \eqref{eq_t} and, for all $t>0$,  ${S_H}_{ii}>0$ and ${S_H}_{ij}>0$ provided that $g_i m_{ji}>0$.
Moreover, solutions of \eqref{eq_t} are bounded.
\end{proposition}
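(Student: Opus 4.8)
The plan is to prove the three assertions in turn, using throughout that every coefficient ($d_H$, $g_i$, $r_{ij}$, $m_{ji}$, $\gamma_H$, $s_L$, $d_m$, ${{\beta}_H}_i$, ${\beta_m}_i$) is nonnegative and that $L_i^* > 0$ under the standing hypothesis $r>1$. For \emph{positive invariance} I would invoke the standard subtangentiality (Nagumo-type) criterion: the closed orthant is invariant as soon as, on each of its faces, the vector field points inward, i.e. whenever one coordinate vanishes while all the others are nonnegative, the corresponding time derivative is nonnegative. This reduces to a term-by-term inspection of \eqref{eq_t_a}--\eqref{eq_t_h}: in each equation every term carrying the vanishing variable as a factor drops out, and the survivors are visibly nonnegative. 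For ${S_H}_{ii}=0$ the survivor is $d_H{N_H}_i^r+\sum_k r_{ik}{S_H}_{ik}\ge 0$; for ${S_H}_{ij}=0$ it is $g_i m_{ji}{S_H}_{ii}\ge 0$; the infected and removed human equations leave only incidence, inflow and recovery terms; for ${S_m}_i=0$ the survivor is $s_L L_i^*\ge 0$; and for ${I_m}_i=0$ it is the nonnegative incidence sum. Hence the field is subtangent on the whole boundary and the orthant is positively invariant.

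For \emph{strict positivity} I would run an integrating-factor argument in two stages. Writing \eqref{eq_t_a} as $\dot{{S_H}}_{ii}=F_i(t)-\phi_i(t){S_H}_{ii}$ with $F_i=d_H{N_H}_i^r+\sum_k r_{ik}{S_H}_{ik}\ge 0$ and $\phi_i=d_H+g_i+{{\beta}_H}_i{I_m}_i/{N_H}_i^p$ bounded, multiplication by $\exp(\int_0^t\phi_i)$ and integration gives ${S_H}_{ii}(t)=\exp(-\int_0^t\phi_i)\big[{S_H}_{ii}(0)+\int_0^t F_i\exp(\int_0^s\phi_i)\,ds\big]$, which is strictly positive for $t>0$ whenever the residence carries a positive population, since then $F_i\ge d_H{N_H}_i^r>0$. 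In the second stage I treat \eqref{eq_t_b} as $\dot{{S_H}}_{ij}=g_i m_{ji}{S_H}_{ii}(t)-\psi_j(t){S_H}_{ij}$ with $\psi_j=d_H+r_{ij}+{{\beta}_H}_j{I_m}_j/{N_H}_j^p$; variation of constants then yields an expression whose integrand $g_i m_{ji}{S_H}_{ii}(s)\exp(\int_0^s\psi_j)$ is strictly positive for $s>0$ as soon as $g_i m_{ji}>0$, forcing ${S_H}_{ij}(t)>0$.

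The heart of \emph{boundedness} is a conservation identity. Summing \eqref{eq_t_a}--\eqref{eq_t_f} over all present nodes $j$ for a fixed residence $i$, the incidence and recovery terms cancel pairwise, and using $r_{ii}=0$, $m_{ii}=0$ and the normalization $\sum_j m_{ji}=1$ the mortality, departure and return terms telescope to zero, so that $\frac{d}{dt}{N_H}_i^r=0$: the resident population of each node is conserved. As every human compartment is nonnegative and their $j$-sum equals the constant ${N_H}_i^r(0)$, each of ${S_H}_{ij},{I_H}_{ij},{R_H}_{ij}$ is bounded by ${N_H}_i^r(0)$. For the vectors, adding \eqref{eq_t_g} and \eqref{eq_t_h} cancels the incidence terms and gives $\frac{d}{dt}({S_m}_i+{I_m}_i)=s_L L_i^*-d_m({S_m}_i+{I_m}_i)$, a scalar linear equation bounded by $\max\{({S_m}_i+{I_m}_i)(0),\,s_L L_i^*/d_m\}$; nonnegativity then bounds ${S_m}_i$ and ${I_m}_i$ individually. (If one only assumes $\sum_j m_{ji}\le 1$, the same computation gives $\frac{d}{dt}{N_H}_i^r\le 0$, which still suffices for boundedness.)

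I expect the only genuine obstacle to be that the incidence terms divide by the present population ${N_H}_i^p$, so the vector field is singular on $\{{N_H}_i^p=0\}$ and the flow is a priori defined only on the open set where all present populations are positive. I would dispose of this precisely through the positivity chain above: since ${N_H}_i^p\ge{S_H}_{ii}$ and ${S_H}_{ii}$ stays strictly positive once the residence is nonempty, the trajectory never meets the singular set, so the local solution extends to $[0,\infty)$ and the invariance and boundedness estimates hold globally.
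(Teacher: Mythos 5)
Your proposal is correct and follows essentially the same route as the paper: positive invariance by checking that the vector field is subtangent on each face of the nonnegative orthant, strict positivity of ${S_H}_{ii}$ and ${S_H}_{ij}$ from the surviving inflow terms, and boundedness from the conservation law $d{N_H}_i^r/dt=0$ together with the boundedness of the vector subsystem. You are in fact more careful than the paper in three places --- the variation-of-constants argument upgrading the boundary check to genuine strict positivity for all $t>0$, the explicit linear bound from $d({S_m}_i+{I_m}_i)/dt = s_L L_i^* - d_m({S_m}_i+{I_m}_i)$ (which the paper delegates to a citation), and the observation that ${N_H}_i^p \ge {S_H}_{ii} > 0$ keeps trajectories away from the singular set where the incidence terms are undefined, a point the paper passes over in silence.
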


\begin{proof}
We easily see that solutions of system \eqref{eq_t} remain nonnegative.
Indeed, it is sufficient to show that for all nonnegative initial condition, the vector-field points out to the interior of the positive orthant. 
Assume now that  ${S_H}_{ii}=0$ at $t=0$, then  $ {d{S_H}_{ii}}/{dt}=d_H {N_H}_i^r+\sum_{k=1}^n r_{ik} {S_H}_{ik}>0,$ thus ${S_H}_{ii}>0$ for $t>0$.
Equally, if ${S_H}_{ij}=0$ at time $t=0$, then ${d{S_H}_{ij}}/{dt}= g_i m_{ji} {S_H}_{ii}>0.$

Finally, the boundedness follows from the positive invariance of $\mathbb{R_+}^{(3n+2)n}$ and the constant population property. Indeed
${d{N_H}_{i}^r}/{dt}=0,$ 
which means that, for any node $i$, the resident population is constant, thus by extension, the whole population is constant. Moreover, solutions of \eqref{eq_t} are bounded, since $(\mathbb{R^+})^{(3n+2)n}$ is invariant, human population is constant and vector population is bounded \cite{Moulay2010}.
\end{proof}

\begin{definition} \cite{Arino2009}.
\textit{1.} The system  is at an equilibrium if the time derivatives in \eqref{eq_t} are zero.\\
\textit{2.} A node $i$ is at the disease free equilibrium (DFE) if ${I_H}_{ji}=0$, ${I_m}_{i}=0$ for all $j=1,...,n$. \\
\textit{3.} The n-nodes model given by \eqref{eq_t} is at the DFE if each node is at the DFE, \textit{i.e.}, ${I_H}_{ji}=0$ and  ${I_m}_{i}=0$, for all $i,j=1,...,n$. 

\end{definition}

\begin{proposition}
System \eqref{eq_t} always has the following disease free equilibrium: 
\vspace*{-0.5cm}
\begin{eqnarray*}
{S_H}_{ii}^*&=&\left(\dfrac{1}{1+g_i\sum_{k=1}^n\frac{m_{ki}}{d_H+r_{ik}}}\right){N_H}_i^{r}, \ \ \ \ \ 
{S_H}_{ij}^*=g_i\dfrac{m_{ji}}{d_H+r_{ij}} {S_H}_{ii}^*\\
{I_H}_{ii}^*&=&0,\quad \quad  {I_H}_{ji}^*=0\\
{R_H}_{ii}^*&=&0,\quad \quad  {R_H}_{ji}^*=0\\
{S_m}_{i}^*&=& \dfrac{s_L}{d_m} L_i^*, \quad {I_m}_{i}^*=0,\quad \mbox{for all $i,j=1,...,n$, $i\neq j$.}
\end{eqnarray*}

\end{proposition}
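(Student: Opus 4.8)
The plan is to verify directly that the proposed point is an equilibrium, i.e.\ that every right-hand side of \eqref{eq_t} vanishes when the stated values are substituted, and then to note that it satisfies the DFE definition. I would organize the verification by compartment type, disposing of the equations that vanish term-by-term first and then solving the genuinely coupled equations for the susceptible humans, from which the nontrivial formulas emerge.

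First, observe that at the proposed point all infected and removed compartments are zero: ${I_H}_{ii}^*={I_H}_{ij}^*=0$, ${R_H}_{ii}^*={R_H}_{ij}^*=0$ and ${I_m}_i^*=0$ for all $i,j$. Every term on the right-hand side of the infected and removed equations \eqref{eq_t_c}--\eqref{eq_t_f} and \eqref{eq_t_h} carries a factor of one of these quantities, so each reduces to $0=0$ at once; the same factor ${I_m}_i^*=0$ kills the force-of-infection term in \eqref{eq_t_a} and \eqref{eq_t_b}, while ${I_H}_{ji}^*=0$ kills the summed infection term in the susceptible-mosquito equation \eqref{eq_t_g}. What remains of \eqref{eq_t_g} is $s_L L_i^*-d_m {S_m}_i^*=0$, which forces ${S_m}_i^*=(s_L/d_m)L_i^*$, exactly the stated value.

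The substantive step is the coupled pair \eqref{eq_t_a}--\eqref{eq_t_b} for the susceptible residents. Setting the right-hand side of \eqref{eq_t_b} to zero gives $g_i m_{ji}{S_H}_{ii}^*-(d_H+r_{ij}){S_H}_{ij}^*=0$, i.e.\ ${S_H}_{ij}^*=\frac{g_i m_{ji}}{d_H+r_{ij}}{S_H}_{ii}^*$, which is precisely the proposed off-diagonal formula. I would then substitute this into \eqref{eq_t_a}. Using $r_{ii}=m_{ii}=0$ to extend the sums over all $k$, the equation becomes a single scalar relation in the unknown ${S_H}_{ii}^*$:
\[
0 = d_H {N_H}_i^r - \Big(d_H+g_i - g_i\sum_{k=1}^n \tfrac{r_{ik}m_{ki}}{d_H+r_{ik}}\Big){S_H}_{ii}^* .
\]
The key simplification is that the coefficient in parentheses collapses once one uses that the $m_{ki}$ are genuine fractions, $\sum_{k=1}^n m_{ki}=1$: writing $g_i = g_i\sum_k m_{ki}$ and combining with the subtracted sum yields $d_H+g_i d_H\sum_k \frac{m_{ki}}{d_H+r_{ik}} = d_H\big(1+g_i\sum_k \frac{m_{ki}}{d_H+r_{ik}}\big)$. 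Dividing through gives exactly the stated expression for ${S_H}_{ii}^*$, and as an automatic consistency check one recovers $\sum_{j=1}^n {S_H}_{ij}^*={N_H}_i^r$, so the resident bookkeeping is respected and the point is the forced one.

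I expect the only real obstacle to be this last algebraic collapse, and in particular making explicit the normalization $\sum_k m_{ki}=1$ — implicit in the word ``fraction'' in the definition of $m_{ji}$ — without which the denominator $1+g_i\sum_k m_{ki}/(d_H+r_{ik})$ would not come out as claimed. Everything else is a direct substitution. A short closing remark should note that the requirement ${I_H}_{ji}=0,\ {I_m}_i=0$ of the preceding Definition holds by construction, so the point is indeed a disease-free equilibrium in the stated sense, and that positivity of ${S_H}_{ii}^*$ and ${S_H}_{ij}^*$ (hence its realizability as a genuine state) follows from Proposition~\ref{positivite} together with $g_i,m_{ji},r_{ij},d_H\ge 0$.
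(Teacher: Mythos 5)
Your verification is correct and follows essentially the same route as the paper: the paper's entire proof consists of the single algebraic identity $1-\sum_{k}\frac{m_{ki}r_{ik}}{d_H+r_{ik}}=\sum_{k}m_{ki}-\sum_{k}\frac{m_{ki}r_{ik}}{d_H+r_{ik}}=d_H\sum_{k}\frac{m_{ki}}{d_H+r_{ik}}$, which is exactly the "algebraic collapse" you isolate, resting on the same implicit normalization $\sum_k m_{ki}=1$. Your write-up merely makes explicit the routine substitutions that the paper leaves to the reader.
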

\begin{proof}
It is sufficient to remark that \\$1-\sum_{k=1}^n \dfrac{m_{ki}r_{ik}}{d_H+r_{ik}}= \sum_{k=1}^n m_{ki}-\sum_{k=1}^n \dfrac{m_{ki}r_{ik}}{d_H+r_{ik}}=d_H \sum_{k=1}^n\frac{m_{ki}}{d_H+r_{ik}}.$  
 \end{proof}

\begin{theorem}\label{theogasDFE}\cite{Arino2009}.
Assume that system \eqref{eq_t} is at an equilibrium and a node $i$  is at the DFE.
Then all nodes that can be accessed from node $i$ and all nodes that have an access to node $i$ are at the DFE.
Moreover, if the outgoing matrix $M^T$ is irreducible, then all nodes are at the DFE.
\end{theorem}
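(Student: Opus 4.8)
The plan is to work at equilibrium, where every time derivative in \eqref{eq_t} vanishes, and to combine this with two facts from Proposition \ref{positivite}: all infected compartments are nonnegative, while the susceptible compartments ${S_H}_{ii}$, ${S_H}_{ij}$ (whenever $g_i m_{ji}>0$) and ${S_m}_i$ are strictly positive (the last because at equilibrium ${S_m}_i(d_m+\sum_l {\beta_m}_i {I_H}_{li}/{N_H}_i^p)=s_L L_i^*>0$). The rates ${{\beta}_H}_i$, ${\beta_m}_i$ and the present populations ${N_H}_i^p\ge {S_H}_{ii}>0$ are positive as well. The organizing observation I would establish first is that, at equilibrium, a node $v$ is at the DFE if and only if ${I_m}_v=0$: setting the right-hand side of \eqref{eq_t_h} to zero gives ${\beta_m}_v ({S_m}_v/{N_H}_v^p)\sum_l {I_H}_{lv}=d_m {I_m}_v$, so ${I_m}_v=0$ forces $\sum_l {I_H}_{lv}=0$ and hence ${I_H}_{lv}=0$ for every $l$ by nonnegativity. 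This reduces the whole theorem to tracking the propagation of the single scalar condition ${I_m}=0$ across directed edges of the travel graph, whose edge $i\to j$ is present exactly when $g_i m_{ji}>0$.

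Next I would prove the two one-step propagation lemmas and iterate along paths. For backward propagation, suppose $i$ is at the DFE and $k\to i$, i.e. $g_k m_{ik}>0$. Writing \eqref{eq_t_d} for the pair $(k,i)$ at equilibrium, the terms ${I_H}_{ki}$ and ${I_m}_i$ vanish (node $i$ is disease free), leaving $0=g_k m_{ik}{I_H}_{kk}$, whence ${I_H}_{kk}=0$. Substituting ${I_H}_{kk}=0$ into \eqref{eq_t_c} for node $k$ makes every surviving term nonnegative, so each vanishes; in particular ${{\beta}_H}_k({S_H}_{kk}/{N_H}_k^p){I_m}_k=0$ gives ${I_m}_k=0$, i.e. $k$ is at the DFE. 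For forward propagation, suppose $i$ is at the DFE and $i\to j$, i.e. $g_i m_{ji}>0$. From \eqref{eq_t_c} for node $i$ (using ${I_H}_{ii}=0$ and ${I_m}_i=0$) I obtain $\sum_k r_{ik}{I_H}_{ik}=0$; since the third part of the preceding Remark (equal zero/nonzero pattern of $M^T$ and $R$) guarantees $r_{ij}>0$ whenever $g_i m_{ji}>0$, nonnegativity yields ${I_H}_{ij}=0$. Plugging ${I_H}_{ii}=0$ and ${I_H}_{ij}=0$ into \eqref{eq_t_d} for $(i,j)$ leaves $0={{\beta}_H}_j({S_H}_{ij}/{N_H}_j^p){I_m}_j$, and as ${S_H}_{ij}>0$ here, we get ${I_m}_j=0$, i.e. $j$ is at the DFE. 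Induction along any directed path then shows that all nodes accessible from $i$ and all nodes having access to $i$ are at the DFE.

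Finally, if $M^T$ is irreducible the travel graph is strongly connected, so every node is reachable from $i$, and the previous step places all nodes at the DFE. The step I expect to be the most delicate is the forward direction: unlike the backward case, the equilibrium equation \eqref{eq_t_d} for ${I_H}_{ij}$ couples ${I_H}_{ij}$ only to the \emph{target} node's mosquitoes ${I_m}_j$, so it cannot by itself force anything to vanish. The argument must first extract ${I_H}_{ij}=0$ from the home-node balance \eqref{eq_t_c} through the return-flux term $\sum_k r_{ik}{I_H}_{ik}$, which is exactly where the equal zero/nonzero pattern of $M^T$ and $R$ is indispensable, and it also relies on the strict positivity of the equilibrium value ${S_H}_{ij}$ in order to divide it out.
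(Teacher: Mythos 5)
Your proof is correct and follows essentially the same route as the paper's: the same chain of equilibrium deductions from \eqref{eq_t_c}, \eqref{eq_t_d} and \eqref{eq_t_h}, the same use of the positivity of ${S_H}_{ii}$, ${S_H}_{ij}$, ${S_m}_i$ and of the matching zero/nonzero pattern of $M^T$ and $R$, and the same induction along directed paths. The only differences are organizational (you isolate the equivalence ``DFE at $v$ $\Leftrightarrow$ ${I_m}_v=0$'' as an upfront lemma, and you carry out the backward propagation without invoking irreducibility, which the paper only does inside the irreducible case), and these if anything make the argument slightly cleaner.
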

\begin{proof}
See  appendix.
\end{proof}

\begin{definition}\cite{Arino2009}.
The disease is endemic within the population if the number of infective individuals is positive in this population.

The disease is endemic on node $i$ if there is a population on node $i$ in which the disease is endemic, \textit{i.e.}, there exists
$k\in\{1,...,n\}$ such as ${I_H}_{ki}>0$. 
\end{definition}

\begin{theorem}\label{theogasEnd} \cite{Arino2009}.
Assume that system \eqref{eq_t} is at an equilibrium and the disease is endemic on node $i$. 
Then the disease is endemic on all nodes that can be reached from node $i$.
In particular, if the matrix $M^T$ is irreducible, then the disease is endemic in all nodes.
\end{theorem}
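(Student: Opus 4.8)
The plan is to deduce this endemic-spread theorem from the disease-free-equilibrium theorem (Theorem~\ref{theogasDFE}) by contraposition, after first reconciling the two notions \emph{"node at the DFE"} and \emph{"disease not endemic on a node"} at an equilibrium. The statements look superficially asymmetric — the DFE theorem propagates backwards (to nodes having \emph{access} to the given node) while this theorem propagates forwards (to nodes \emph{reached} from the given node) — but the two are the same assertion read through a negation, so almost all the work will go into setting up that negation cleanly.

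First I would record the following equivalence, valid at any equilibrium of \eqref{eq_t}: a node $i$ is at the DFE if and only if the disease is not endemic on it, i.e. ${I_H}_{ji}=0$ for all $j$. One direction is immediate from the definitions. For the converse the subtle point is that endemicity is defined only through the human infected classes ${I_H}_{ji}$, whereas the DFE additionally requires ${I_m}_{i}=0$; this gap must be closed using the equilibrium condition on \eqref{eq_t_h}. Setting the right-hand side of \eqref{eq_t_h} to zero gives $d_m {I_m}_{i}=\sum_{j=1}^n {\beta_m}_i {S_m}_{i} {I_H}_{ji}/{N_H}_i^p$, so if every ${I_H}_{ji}$ vanishes then ${I_m}_{i}=0$ as well. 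Hence, at equilibrium, the condition "$\exists k:\ {I_H}_{ki}>0$" is equivalent to "$i$ is not at the DFE."

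With this equivalence in hand I would obtain the theorem as the contrapositive of Theorem~\ref{theogasDFE}. Fix a node $j$ reachable from $i$, so that $i$ has an access to $j$ in the mobility graph, and suppose for contradiction that the disease is not endemic on $j$. By the equivalence just established, $j$ is then at the DFE. Applying Theorem~\ref{theogasDFE} with $j$ playing the role of the DFE node, every node having an access to $j$ is at the DFE; in particular $i$ is at the DFE, hence — again by the equivalence — the disease is not endemic on $i$, contradicting the hypothesis. Therefore the disease is endemic on $j$, and since $j$ was an arbitrary node reachable from $i$, it is endemic on all such nodes.

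Finally, when $M^T$ is irreducible the associated mobility digraph is strongly connected, so every node is reachable from $i$, and the previous step yields endemicity throughout the network. I expect the only genuinely delicate step to be the opening equivalence: one must invoke the equilibrium structure of the mosquito equation \eqref{eq_t_h} to upgrade the weak "no infected humans present" condition to the full DFE condition. Once that reconciliation is made, the result is a direct application of the reverse-access clause of Theorem~\ref{theogasDFE} together with the strong-connectivity consequence of irreducibility.
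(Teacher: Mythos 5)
Your proof is correct, but it takes a genuinely different route from the paper's. The paper proves the endemic-propagation result directly and in the forward direction: assuming some ${I_H}_{q1}>0$, it first upgrades this to ${I_H}_{11}>0$ by a contradiction argument on \eqref{eq_t_c} and \eqref{eq_t_h}, then uses the equilibrium form of \eqref{eq_t_d} together with ${S_H}_{1j}>0$ (Proposition~\ref{positivite}) to force ${I_H}_{1j}>0$ for every $j\in V_{1\rightarrow}$, and iterates over $\mathcal{A}_{1\rightarrow}$. You instead deduce the theorem as the contrapositive of the reverse-access clause of Theorem~\ref{theogasDFE}, after first proving the equivalence, at equilibrium, between ``node $j$ is not endemic'' and ``node $j$ is at the DFE''; that equivalence lemma is exactly the right bridge, and your use of \eqref{eq_t_h} to upgrade ``all ${I_H}_{kj}=0$'' to ``${I_m}_j=0$'' is the same mechanism the paper uses inside its direct proofs. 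Your route is more economical and makes the duality between the two theorems explicit, but it inherits its validity from the backward-propagation clause of Theorem~\ref{theogasDFE}; note that in the appendix the paper only writes out that backward argument under the heading ``assume $M^T$ is irreducible'' (even though the argument given there does not actually use irreducibility and holds in general), so if you present your proof you should either cite that clause carefully or include the short one-step backward-propagation argument yourself. The paper's direct proof, by contrast, is self-contained and exhibits which specific subpopulations (${I_H}_{11}$, then ${I_H}_{1j}$, then ${I_H}_{jj}$) carry the infection outward, information your contrapositive argument does not produce.
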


\begin{proof}
See appendix.
%
\end{proof}

\subsection{The vector mobility model} 


Contrary to human displacements, it is unrealistic to identify mosquitoes  by their origin and destination. Nevertheless, we know that  \textit{Aedes albopictus} mosquitoes have a limited flight range.  Most mosquitoes disperse less than two hundred meters away from their original breeding place \cite{Vermillard,Nishida1993}. 
Indeed, mosquitoes present in a node $i$ may have an activity in neighboring nodes $j$, depending on their proximity (defined later).
Metapopulation models of the spread of vector-borne diseases (see for instance \cite{Zongo2008, Tsanou2011}) that include human long-distance displacements, do not take into account mosquito mobility. In our case of daily movements and very precise resolution, we cannot neglect the influence of mosquitoes displacements. We propose to model this activity using the biological radius of interaction around their breeding sites. Let us denote by $d_{ij}$ the distance between nodes $i$ and $j$ and by $d_{max}$ the maximum interaction radius of mosquitoes (approximately 200 m). In particular, we have $d_{ii}=0$ for all $i=1,...,n$. Now we assume that mosquitoes originated from node $i$ interact with population of node $j$, according to a function of the distance linearly decreasing. This function is given by 
\begin{eqnarray}\label{eq:distance_moustique}
\psi(d_{ij})=\left\{ \begin{array}{cl}\dfrac{d_{max}-d_{ij}}{d_{max}} &\mbox{ if } d_{ij}<d_{max}\\[3mm]
0 &\mbox{ else}\end{array}
\right.
\end{eqnarray}
Then, model \eqref{eq_t} becomes

\begin{subequations}
\label{eq_t2}
\begin{align}
\frac{d{S_H}_{ii}}{dt}&=\!d_H({N_H}_i^r\!-\! {S_H}_{ii})\!-\!g_i {S_H}_{ii}\!+\!\displaystyle\sum_{k=1}^n r_{ik} {S_H}_{ik}\!-\! \sum_{k=1}^n  {{\beta}_H}_i  \psi(d_{ik}) \dfrac{Im_{k}}{{N_H}_i^p} {S_H}_{ii}\!\label{eq_t_a2}\\
  \dfrac{d{S_H}_{ij}}{dt}&=\! g_i m_{ji} {S_H}_{ii}\!-\!d_H {S_H}_{ij}\!-\!r_{ij} {S_H}_{ij}\!-\!\displaystyle\sum_{k=1}^n {{\beta}_H}_j\psi(d_{ik}) \dfrac{Im_{k}}{{N_H}_j^p} {S_H}_{ij}\!\label{eq_t_b2}\\
  \dfrac{d{I_H}_{ii}}{dt}&=\!-\! d_H {I_H}_{ii}\!-\!g_i {I_H}_{ii}\!+\!\displaystyle\sum_{k=1}^n r_{ik} {I_H}_{ik}\!+\!\displaystyle\sum_{k=1}^n  {{\beta}_H}_i \psi(d_{ik})\dfrac{Im_{k}}{{N_H}_i^p} {S_H}_{ii}\!-\! \gamma_H {I_H}_{ii}\!\!\label{eq_t_c2}\\
  \dfrac{d{I_H}_{ij}}{dt}&=\!g_i m_{ji} {I_H}_{ii}\!-\! d_H {I_H}_{ij}\!-\!r_{ij} {I_H}_{ij}\!+\!\displaystyle\sum_{k=1}^n {{\beta}_H}_j\psi(d_{ik}) \dfrac{Im_{k}}{{N_H}_j^p} {S_H}_{ij}\!-\! \gamma_H {I_H}_{ij}\!\label{eq_t_d2}\\
  \dfrac{d{R_H}_{ii}}{dt}&=\!\gamma_H {I_H}_{ii}\! -\! d_H {R_H}_{ii}\!-\!g_i {R_H}_{ii}\!+\!\displaystyle\sum_{k=1}^n r_{ik} {R_H}_{ik}\!\label{eq_t_e2}\\
  \dfrac{d{R_H}_{ij}}{dt}&=\!g_i m_{ji} {R_H}_{ii}\!+\!\gamma_H {I_H}_{ij}\! -\! d_H {R_H}_{ij}\!-\! r_{ij} {R_H}_{ij}\!\label{eq_t_f2}\\
\dfrac{d{S_m}_{i}}{dt}&=\! s_L L_i\!-\! {d_m}{S_m}_{i}\!-\! \displaystyle\sum_{k=1}^n{\beta_m}_i\psi(d_{ik})\dfrac{{S_m}_{i}}{{N_H}_k^p} {I_H}_{k}^p\label{eq_t_g2}\\
\dfrac{d{I_m}_{i}}{dt}&=\! \displaystyle\sum_{k=1}^n{\beta_m}_i\psi(d_{ik})\dfrac{{S_m}_{i}}{{N_H}_k^p} {I_H}_{k}^p\!-\!{d_m} {I_m}_{i}\label{eq_t_h2}
 \end{align}
 \end{subequations}
 which describes both the vector population dynamic and the virus transmission, through the human and vector mobilities in the network.
Note that aquatic stages of eggs and larvae remain described by equations \eqref{eq_mous_c} and  \eqref{eq_mous_d} and complete the model.
\section{Application} 
\label{sec:application}

Willing to validate this approach, this section proposes a comparison of this model with a real-life example of chikungunya epidemic, namely, the event that occurred in 2005-2006 on the R\'eunion Island, Indian Ocean. This validation process makes a strong usage of real and realistic data, to reflect as much as possible the original environment. 
This last one is modeled based on real geographical information. Populations mobilityis modeled following  existing eal data analysis. Finnally, epidemiological results are compared to real data from the 2005-2006 event. 

\subsection{Distribution of the Human Population} 
\label{sub:distribution_of_the_human_population}

The R\'eunion Island is a mountainous region and the local population is not uniformly spread over the land.  To reflect this particular density in the metapopulation model, we rely on two realistic sources of information. 


The INSEE gives access to an estimated density of the population, based on a mesh zoning of the space \cite{INSEE}. This zoning consists in squares of one kilometer long. In each square, the number of people living in is estimated based on both tax cards and cadastral data. Figure \ref{fig:InseeMap} shows these $1~km^2$ squares with their associated population. 
\begin{figure}
	[ht!] \centering 
	\includegraphics[width=0.55
	\linewidth]{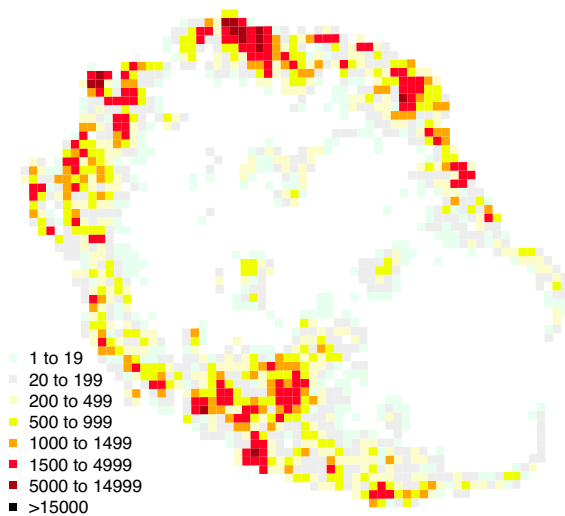} \caption{Estimated population distribution of the R\'eunion in 2007, according to the French Institution for Statistics, on a $1~km^2$ granularity.} 
	\label{fig:InseeMap} 
\end{figure}
  
There is a good confidence in the quality of this data, however, the limited granularity of one square kilometer is not precise enough for our purpose. 
Indeed this model envisions a lower scale of interaction between humans and mosquitoes, since the latter only have a couple hundreds of meters interaction range.

Secondly, the road network is used as a proxy to the human density. Indeed, the road network is denser where the population is itself dense, and lighter where no one lives. Data extracted from OpenStreetMap (OSM) \cite{OpenStreetMap} provide road information and especially road intersections at a high resolution (scale of one meter). We propose to use those intersections as the nodes of the metapopulation model. 

Then the human population is distributed on these nodes. Each of them belongs to one of the $1~km^2$-areas of the INSEE. So, on each of these squares, the local population can be evenly distributed among its own nodes. If a square has no node associated to it (it happens on low density areas) then a single node is created with the corresponding population.

Finally, this distribution is at least as relevant as the real data given by the INSEE at the scale of $1~km^2$, but the resolution goes below the $1~km$ limit, thanks to the crossroad/node analogy.  Figure \ref{fig:schema_construction} illustrates the main steps of the construction of the model. 

\begin{figure}
	[ht!] \centering \subfigure[]{
	\includegraphics[width=0.49
	\linewidth] {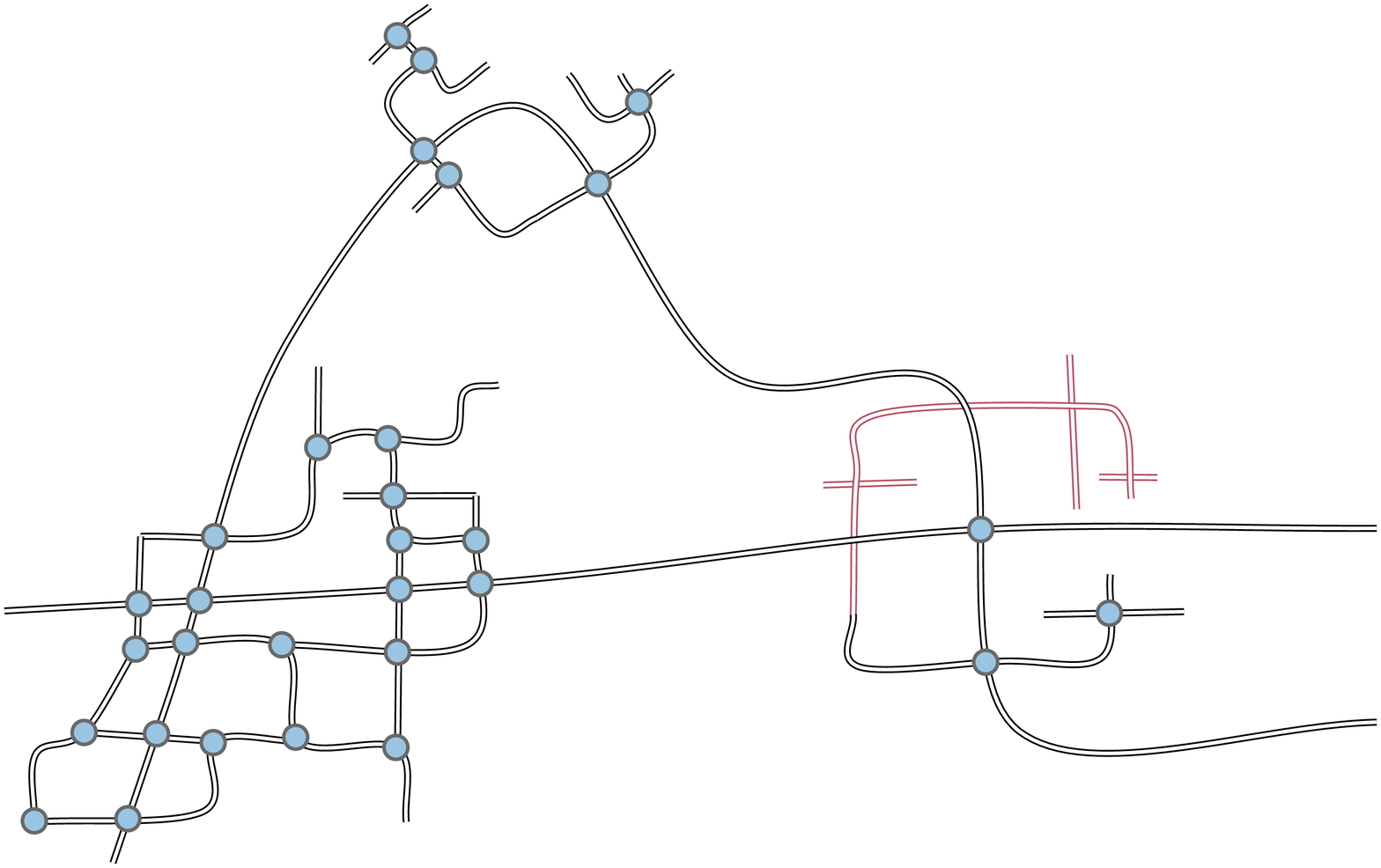}\label{sfig:roads2}} \subfigure[]{
	\includegraphics[width=0.48
	\linewidth] {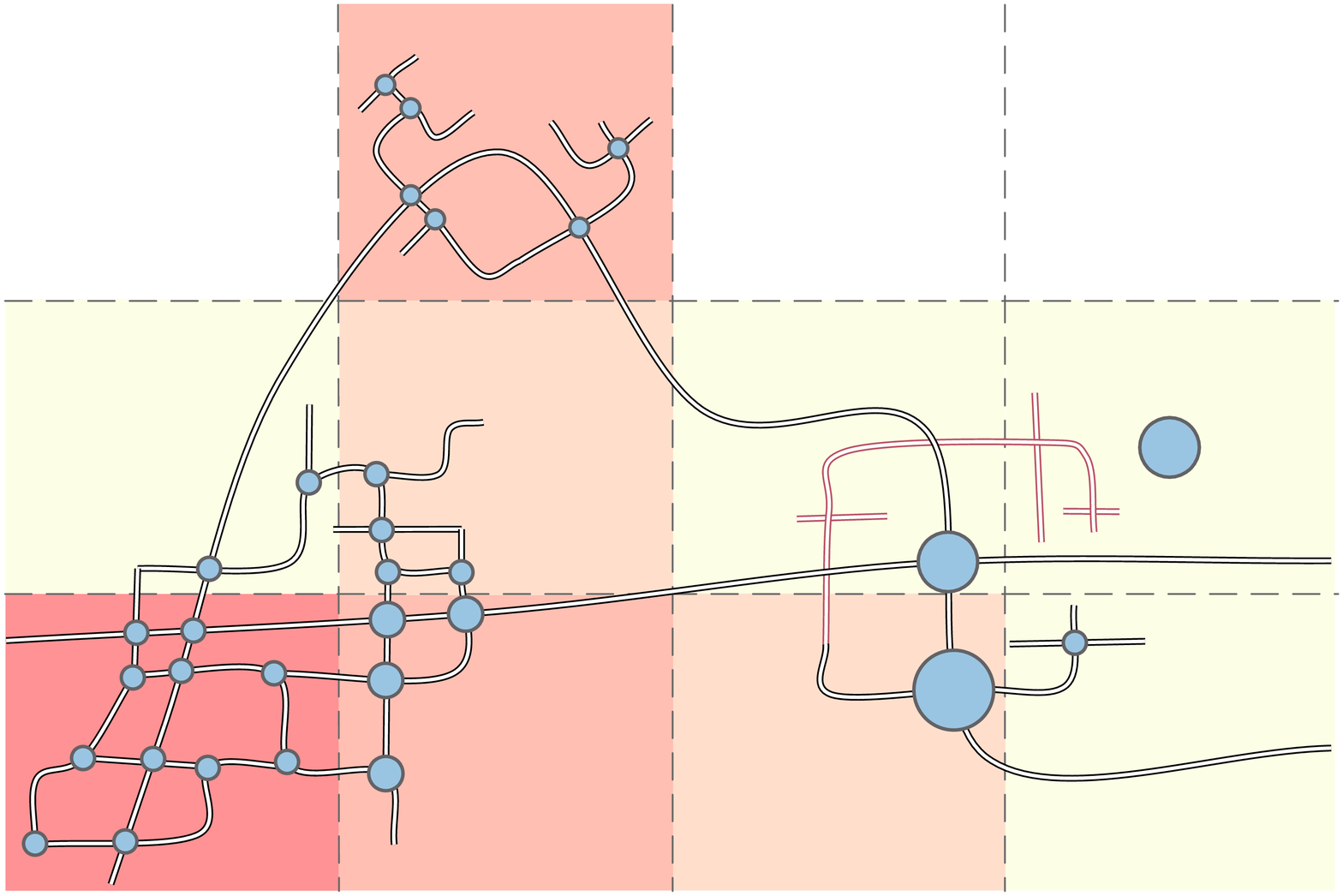}\label{sfig:roads3}} 
\caption[Metapopulation network schematic construction steps.]{
Schematic construction of the nodes of the metapopulation network. \ref{sfig:roads2}, network nodes are the roads intersections, knowing that this network may not be totally accurate (red roads missing in the map). \ref{sfig:roads3}, distribution of the population on nodes according to the INSEE data. Empty cells (with no known road network) are given one node (as in Fig. \ref{sfig:roads3}, second row, rightmost cell).}\label{fig:schema_construction} 
\end{figure}


\subsubsection{Distribution of the Mosquito Population} 
\label{ssub:distribution_of_the_mosquito_population}

No information about the density of mosquito population is available. Moreover, we are not interested in the whole population but rather  in the part of it that interacts with humans. Our hypothesis is that the density of mosquitoes is homogeneous. 

To retrieve this density from the human-centered network, each node is given a number of mosquitoes (through the carrying capacities) proportional to the geographical surface associated with that node, thanks to a simple Voronoï tessellation.

Since we only consider mosquitoes that interact with humans,  the surface is upper-bounded with the disk of radius $d_{max}$ that expresses the maximum interaction distance for a mosquito along its life. 

Let $S_i$ be the surface of node $i$. Let $S_{max}=\pi d_{max}^2$ be the surface associated with the maximal interaction radius $d_{max}$. Carrying capacities for each node are then set as follows:
$K_{E_i} = K_E \  \varphi(S_i)$ and $K_{L_i} = K_L\  \varphi(S_i)$, with $K_E$ and $K_L$,  constants for the carrying capacities of eggs, respectively larvae and $\varphi(S_i)$ is equal to $S_i / S_{max}$ and is upper bounded by $1$.


\subsubsection{Human Mobility} 
\label{ssub:human_mobility}

In the model, humans trips are given by  outgoing matrix $M^T$ and incoming matrix $R$ and represent for any pair  $i$ and $j$, the probability for humans living on $i$ to go to $j$. Non-null values in the matrix define edges between nodes in the metapopulation network. 

To reach realistic motion we need to rely on realistic data or model for human mobility.  Unfortunately, we could not get  such information  for  the Island, so we rely on some more general analysis work from  M. C. Gonz\'alez, C. A. Hidalgo and A. Barab\'asi \cite{Gonzalez:2008fk} who
analyze
mobile phone communication logs of 100,000 users. These logs register the geographical position of  mobile phones when calls happen or when texts are emitted/received. Mobile phones are a good proxy to the human mobility because users always carry their phone. 

The authors  were able to produce general laws on observed mobility patterns. These results give general formulas that describe the mobility observed. We use these formulas to generate new human mobility patterns. We consider the  three following  measures.
\paragraph{\textbf{Trips length}}The distribution of the length of a human trip $\Delta r$ in kilometer  according to the following power law: 
	\begin{equation}
		P(\Delta r)=(\Delta r + \Delta r_0)^{-\beta}\exp(-\Delta r / \kappa) 
	\end{equation}
	where $\Delta r_0=1,5 km$ is the cutoff value of the law and $\kappa= 80 km$.
 \paragraph{\textbf{Presence probability}}
 Given $N$ possible destinations for each individual, their presence probability in each destination is approximated with the following  Zipf law
$$f(k;N)=\frac{1/k}{\sum_{n=1}^N (1/n)},$$
where $k$ is the rank of each destination when decreasingly sorted. 
\paragraph{\textbf{Return probability}}
The authors observe that there is a peak of probability of returning to the same place every 24h hours. In other words, the human mobility mainly follows a daily pattern such as home/work trips. 

From these measures 
we
generate artificial per-individual trips on the island that respect the observed properties in the original data set.



\begin{remark}
The model defines  matrices $M^T$ and $R$ as tables indicating respectively departures and returns. The mobility model we could create out from  \cite{Gonzalez:2008fk} however gives the probability of presence for a human, given all of its possible destinations. Our hypothesis here is that there is a link between  departures / returns probabilities and presence probabilities that we do not investigate in this paper.
\end{remark}


\subsubsection{Mosquito Mobility} 
\label{ssub:mosquito_mobility}

As stated in the model description, mosquitoes  are not identified with origins and destinations. Since we are aware of their limited flight range, their mobility is thus defined by a geographical disk area of interaction centered at their origin node. Function \eqref{eq:distance_moustique} defines their interaction with humans. 

This new mobility pattern is constructed, in the metapopulation network, with  edges linking nodes that are below the interaction range $d_{max}$. 
%



\section{Results} 
\label{sec:results}

In this section we show and discuss results of the simulation of our metapopulation model applied to the scenario of the R\'eunion Island described in the previous section. Our results are then compared with real data from the 2005-2006 epidemic event.

\subsection{Analysis of the Metapopulation Network} 
\label{sub:analysis_of_the_metapopulation_network}

The metapopulation network can be represented by two graphs sharing the same set of nodes, or in other words, it is a graph with 2 subsets of edges. One subset is for human mobility and one is for mosquito mobility. 
The purpose of Table \ref{tab:graph_metrics} is to give some metrics for those two graphs to give a general overview of theirs dimension. Indeed due to its size no visualization of the all network is proposed since it does not give any useful information. 
\begin{table}[h]
	\caption{Metrics for the two graphs of the metapopulation network.}
	\centering
{\footnotesize
\begin{tabular}
	{lcc} \hline & Mosquitoes & Humans\\
	\hline number of nodes & 17988 & 17988 \\
	number of links & 151772 & 744313 \\
	average degree  & $\approx 17$ & $\approx 83$ \\
	connectivity  & no (1729 connected components) & yes \\
	diameter & 71 (for the biggest component) & 15 \\
	\hline   
\end{tabular}}
\label{tab:graph_metrics}
\end{table}

\subsection{Spread of the Disease in the Network} 
\label{sub:spread_of_the_disease_in_the_network}


In this scenario we observe the spread of the disease with  one infected individual put on one node of the network. Provided infection parameters ($\beta_H$ and $\beta_m$) are set high enough, the insertion of this individual in a system at disease-free equilibrium may start an epidemic event. Three nodes of the network are  monitored:  the first where the individual was inserted, a one-hop neighbor, and a farther one located $6 km$ from the insertion node. Four metrics are observed: susceptible humans ($S_H$), susceptible mosquitoes ($S_m$), infected humans ($I_H$), and infected mosquitoes ($I_m$).

Figure \ref{fig:3_nodes_neighbors} shows, as expected, a shift in time of the evolution is observed from the closest nodes to the farther one.
However, that shift (and thus the spread of the disease) is not proportional to the geographical distance between nodes but rather to a graph distance. The third observed node is 60 hops far from the insertion node in the mosquito mobility graph and only 4 hops away in the human mobility graph. Human mobility, thus,  greatly speeds up the spread.

\begin{figure}
	[ht!] \centering 
	\includegraphics[width=0.73\linewidth]{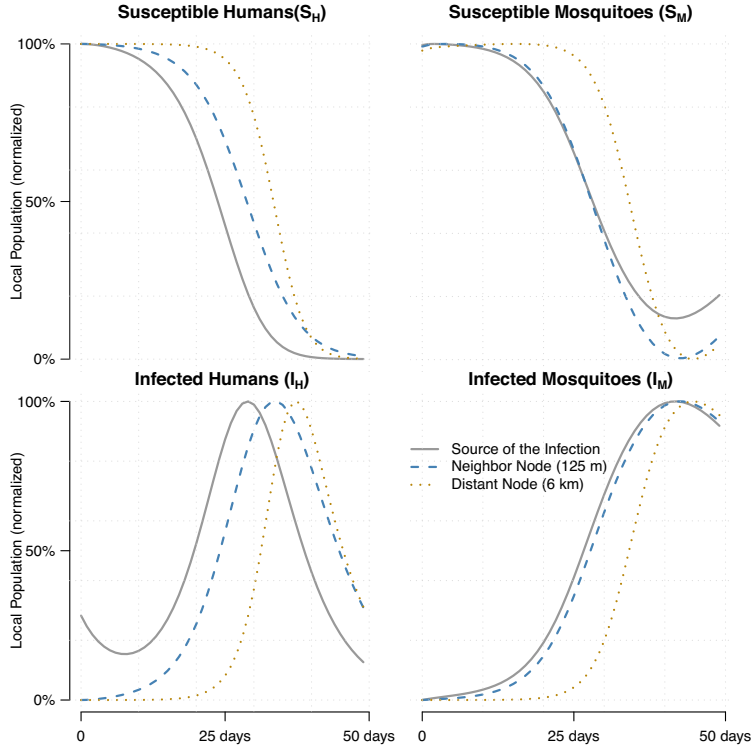} \caption[Spread of the disease in the network.]{Spread of the disease in the network for three observed nodes. The first  gets the infected human. The second  is an immediate neighbor ($125 m$ away from the first). The third, $6 km$ away from the first node, is 60 hops away in the mosquito mobility graph and  4 hops away in the human mobility graph. Quantities of population (y-axis) are normalized. A shift (or delay) effect in the spread is observed.}\label{fig:3_nodes_neighbors} 
\end{figure}


\subsection{Consequences of the Mosquito Mobility} 
\label{sub:consequences_of_the_mosquito_mobility}

As stated in the introduction, most of the metapopulation models on vector-borne viruses focus on long distance and long duration journeys for the human population. In those models the vector mobility can easily be ignored. Here we consider daily-based mobilitty with short journeys. We believe this shorter scale in time and space brings the new constraint that mosquito mobility cannot be neglected anymore.

The effect of local mosquito interaction is shown with a scenario where again, a disease free population is inserted an infected human.  Two experiments are carried out, one with mosquito mobility and one without it. Human mobility is kept in both scenarios. 

Figure \ref{fig:mosquito_mobility} shows the number of instantaneous infected humans ($I_H$). The infection starts rapidly and is concentrated with mosquito mobility enabled. Oppositely, without this mobility, the spread takes more time to start and the peak is less important. It is clear that mosquito mobility at this scale has to be considered.

\begin{figure}
	[ht!] \centering 
	\includegraphics[width=0.55\linewidth]{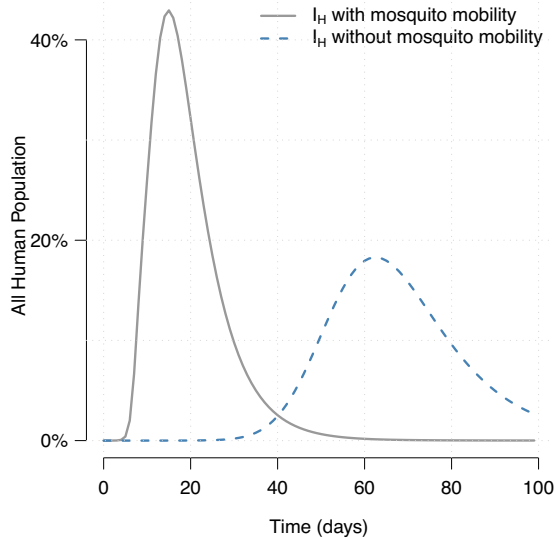} \caption[Consequences of the mosquito mobility.]{Effect of mosquito mobility of the spread of an epidemic. Evolution of the number of infected humans ($I_H$) with and without mosquito mobility.  In order to obtain comparable results, infection rates as set to a relatively high level: $\beta_H=0.2$  $\beta_m=0.15$. }\label{fig:mosquito_mobility} 
\end{figure}


\subsection{Consequences of the Human Mobility} 
\label{sub:consequences_of_the_human_mobility}
Human mobility has an obvious effect on the mobility. The purpose here is to analyze realistic and slight modifications of this mobility like quarantine measures that may be taken in case of epidemics.

This scenario proposes to stop the human mobility on infected parcels. So quarantine measures will be localized only on nodes where a given threshold of infection is reached. In quarantined nodes inhabitants are not allowed to move out and no foreigners are allowed in. Since mosquitoes  are not stopped by quarantine measures, they continue to interact within all nodes.

Figure \ref{fig:human_mobility} shows global instantaneous and cumulated infections values for humans at the scale of the all population. 
In this scenario, the disease would, without any control, reach 35\% of the population, just like the chikungunya event of 2005-2006. 

Results show the possibility to rapidly reduce the instantaneous infection rate. For instance, the $I_H$ peak  is almost divided by two when the quarantine threshold is set to  10\%. However looking at cumulated infection cases (seroprevalence), such a threshold does not significantly reduces the total amount of infected. To expect a real effect on the total number of infections one have to set the threshold below 1\%. Here, each node on average has less than 50 humans, so any threshold below 2\% is technically impossible to achieve.  

\begin{figure}
	[ht!] \centering 
	\includegraphics[width=0.65\linewidth]{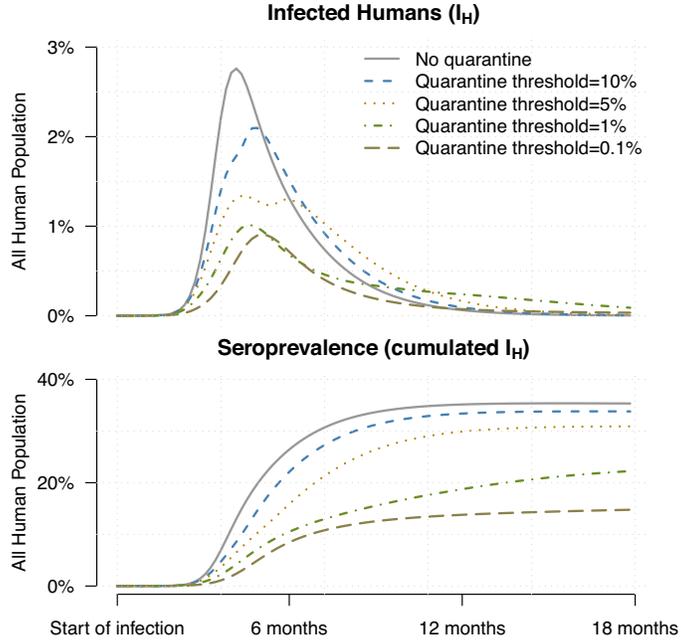} \caption[Consequences of the human mobility.]{Consequences of the human mobility. The instantaneous number of  infected humans and the seroprevalence (or number of people ever infected) are observed during an epidemic. Infection values depend on the quarantine threshold that reduce, node per node,  the human mobility, over the time. A threshold of  10\% indicates that a node with an infection rate of at least 10\% will be quarantined (incoming and outgoing human mobility are blocked).}\label{fig:human_mobility} 
\end{figure}


\subsection{Parameters Analysis} 
\label{sub:parameters_analysis}

The identification of predominant parameters of the system is needed to understand its behavior. We propose here a numerical  analysis of the parameters of the system. An in depth statistical sensitivity analysis or the analytical estimation of parameters would be of great interest too, but are out of the scope of this paper. 

Infection rates parameters, $\beta_H$ (mosquitoes to humans) and $\beta_m$ (humans to mosquitoes) have a strong impact on the results. 
Depending on the values selected for  $\beta_H$ and $\beta_m$, the infection goes from some few isolated cases to an epidemic that contaminates the whole population.


Fig. \ref{fig:betaH_betaM}  shows a scatter plot of values taken by $\beta_H$ and $\beta_m$. For each couple ($\beta_H$ ,$\beta_m$) the seroprevalence (after 400 days of epidemic) is observed. Values in percentage represent the ratio of the total human population ever infected depending on the two parameters. Contour lines help finding, given a seroprevalence percentage, values for $\beta_H$ and $\beta_m$. These vales are used in the validation. 

\begin{figure}
	[ht!] \centering 
	\includegraphics[width=0.5\linewidth]{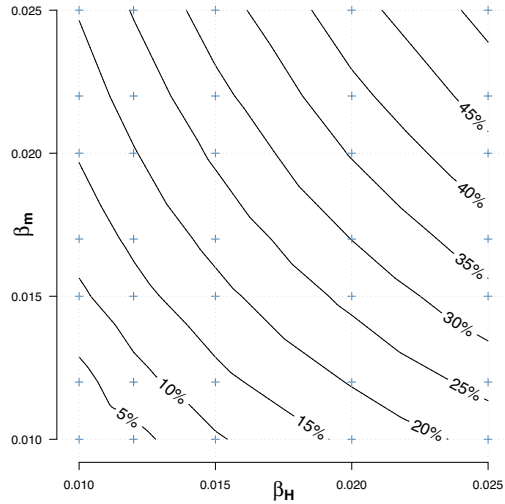} \caption{Effect of the  $\beta_H$ and $\beta_m$ parameters on the seroprevalence. Blue crosses are values obtained by simulation. Contour lines are obtained with a bilinear interpolation of these data points.}\label{fig:betaH_betaM} 
\end{figure}


\subsection{Validation against real data} 
\label{sub:validation}

We propose a validation with a comparison against real data from the 2005-2006 chikungunya epidemic. The two first cases of chikungunya where reported at the beginning of March 2005. The disease propagated during the following weeks and reached a peak at mid of May. The number of new cases then started to decrease until authorities thought the event was over at the end of the year 2005 with a seroprevalence (total number of cases) of 6,000 people. But in the second half of December 2005 the spread started over with a strength that was  not comparable to the previous peak. This second event reached a peak in February 2006 with more than 47,000 cases in a week. This sudden reactivation of the spread was later explained by a genetic mutation of the virus that lead to a new and more infective strain \cite{Vazeille2007}.  After the peak, the number of cases slowly started to decrease. The epidemic was declared over by April 2006. In the end, the InVS (French Institute for Health Care) counted 265,733 cases of chikungunya from March 2005 to April 2006. This represents more than 35\% of the total population of the Island. 

The following results are compared to real data indicating, week per week, new cases of the disease. This information was kindly provided by the InVS. 

Back to our model, the new virus strain with its stronger infection rate is modeled thanks to infection rates $\beta_H$ and $\beta_m$. We try to reproduce the all event by starting the simulation of the epidemic with one set of parameters, and then, by changing only one time the values of  $\beta_H$ and $\beta_m$ at the  moment it appears in the real event. The two sets of parameters where selected experimentally with the help of the previous study (see Fig. \ref{fig:betaH_betaM}) giving possible values for  $\beta_H$ and $\beta_m$ for a given seroprevalence. 


Figure \ref{fig:Reseau-mutation} compares the evolution of the real seroprevalence of the chikungunya event, with simulation results obtained with our model. Although the two curves do not fit perfectly, it allows a visual overall validation of the model. 

\begin{figure}
	[ht!] \centering 
	\includegraphics[width=0.65\linewidth]{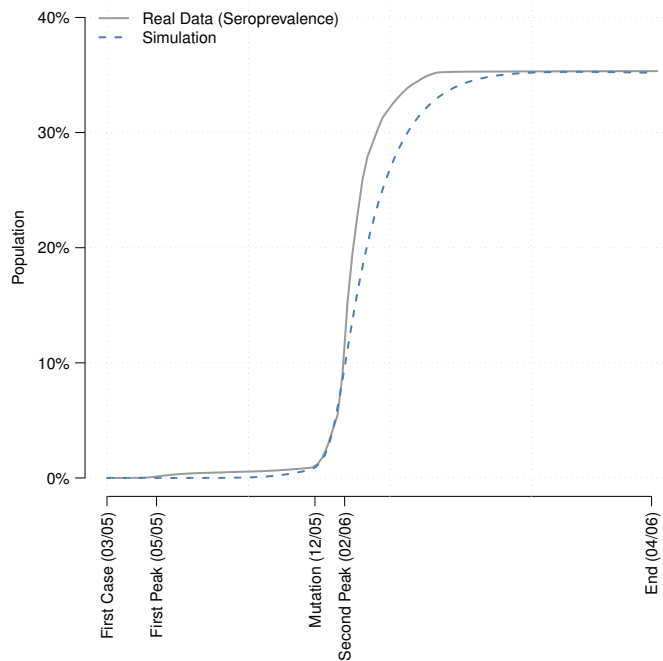}
	\caption[Comparison of the model with real data.]{Comparison of the model with real data. The seroprevalence is considered for the duration of the R\'eunion Island epidemic.  Values of   $\beta_H$ and $\beta_m$ are $0.0118$ and  $0.0101$,  before the mutation event and  $0.0245$ and $0.0161$, after. Those two couples $(\beta_H,\beta_m)$ are chosen thanks to the study in section \ref{sub:parameters_analysis}. Values for the other parameters of the system are given in Table \ref{tableau1}.}\label{fig:Reseau-mutation} 
\end{figure}

These results are a first step in the validation of this approach.    
In these experiments, only infection rates where investigated. Of course, other parameters need to be considered. Especially those leading the human mobility. Moreover the metapopulation approach needs to be validated at a lower scale than the scale of the all Island. These last observations form the perspectives of this work.


\subsection{Stochastic Analysis of the Model} 
\label{ssub:stochastic_analysis_of_the_model}

The model can be affected by stochastic changes. Indeed, the human mobility given by matrices $M^T$ and $R$ are defined according to the mobility probability law given in \cite{Gonzalez:2008fk}. 

We propose to study the robustness of the system by running simulations of the model with various instances of the human mobility matrices and thus observe the variations in the obtain results. 
The population itself and the number of displacements do not change, only the destinations are different.

Figure \ref{fig:nodes_Ih_analysis} illustrates the evolution of infected humans on various (randomly chosen) nodes in the network, within the very same scenario that the mutation experiment described in section 
\ref{sub:validation}.
We assume that observations made on $I_H$ also apply on the other outputs of the model. 30 different human mobility matrices are used. The figure shows that the height of the variation, although different on each node, remains within coherent boundaries. Indeed when considering, for each node, the date when the standard    deviation is the highest, compared to the node's population, it remains below 2\% of that  population, as illustrates Table  \ref{tab:nodes_Ih_analysis}. 

\begin{figure}[h]
	\centering
	\includegraphics[width=\linewidth]{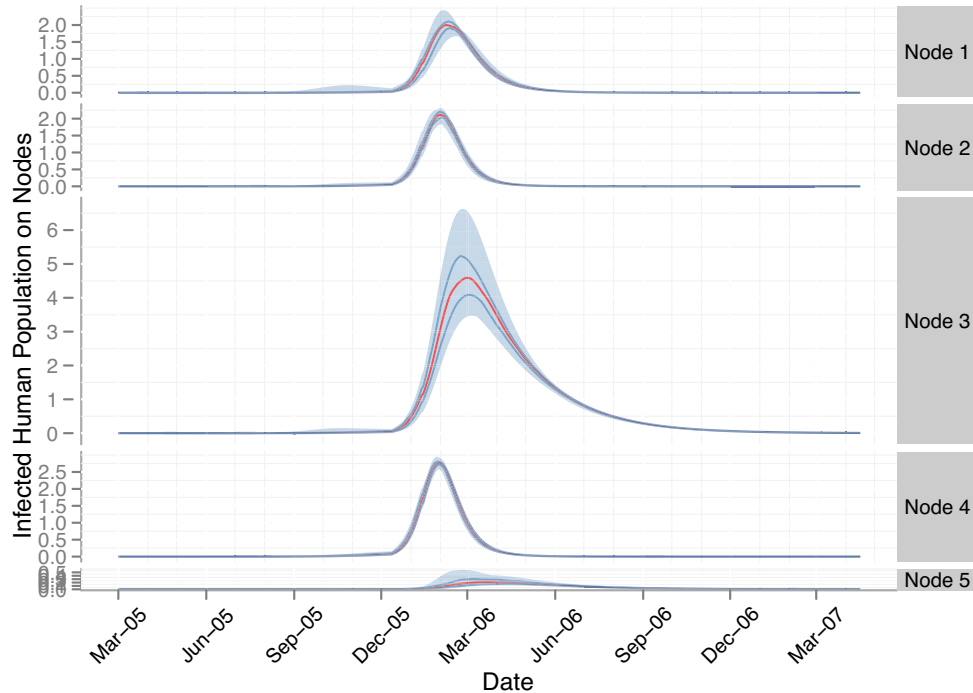}
	\caption{Statistics on $I_H$ for a random selection of nodes. The blue ribbon shows minimum and maximum values over the 30 experiments. Darker blue curves show $1^{st}$ and $3^{th}$ quartiles while the red curve shows the median.}\label{fig:nodes_Ih_analysis}
\end{figure}

\begin{table}[h]
	\caption{Highest values of Standard Deviation (SD) for $I_H$ on a random selection of nodes.}\label{tab:nodes_Ih_analysis}	
\centering{\footnotesize
\begin{tabular}{lccccc}
\hline
 &  Node 1 &   Node 2 &Node 3 &  Node 4 &   Node 5   \\
\hline
Population & 24 & 16 &  932 & 21 &  78\\
SD & 0.27 & 0.16 & 0.76 & 0.17  & 0.13 \\
date of SD (in 2006) &  01-24 & 01-16 & 02-15 & 01-12 & 02-21 \\
\% of population & 1.11 \% & 1.05 \% & 0.08 \% & 0.84 \% & 0.16 \% \\
\hline
\end{tabular}}
\end{table}
When considering simulation results at the scale of the whole network, the variation is even lower with a maximum standard deviation (on the $10^{th}$ of January 2016) of 544 infected humans which represents less that 0.07 \% of the population of the Island. 

These results tend to confirm the robustness of the system  against stochastic variations.



\section{Conclusion}\label{sec:conclusion}

In this work the main concern was to try to validate dynamical systems for the spread of a vectorial disease against a real epidemic scenario. We focused on the  R\'eunion Island's epidemic that occurred in 2005-2006. 

Many issues appear when trying to bridge the gap between global models and real life problems. Among them we chose to focus on the modeling of populations' mobilities. This logically implied a realistic modeling of environment, bringing us from temporal modeling to spatio-temporal modeling.

Designed to consider spatial interactions, the theory of metapopulations allows the special geographical environment of the Island to be included into the original model. This metapopulation network was created based on real geographical and demographical data. It could then handle local interactions between nodes and thus allow populations mobilities to become part of the model too. 

We proposed two mobility models. The first one, for humans, is based on the analysis of real human mobility data sets (mobile phones probes). The second model, for mosquitoes, is based on local interactions between nodes. 

After various studies of the consequences of mobilities on the spread of the disease, and after an analysis of the parameters of the system, a validation of the model against real seroprevalence data from the R\'eunion epidemic was proposed. Results validate the approach and clearly identify the human mobility as a key parameter in the spread of such an epidemic. 

As stated above, the choice in this work was to focus on mobilities. However, other recognized  issues play a key role in the all process. For instance, the effect of rain falls and weather seasons are known to directly influence mosquitoes evolution stages (especially aquatic phases). Studying the effect of seasons on the epidemic is definitely a perspective of this work. 

Finally, the integration of the metapopulation model and mobilities contributed to increase the complexity of the model. This last model which has only been explored on a numerical bases would need an analytical study.   

\section*{Acknowledgment}
The authors would like to thank the InVS and  particularly Ms. Elsa \textsc{Balleydier} epidemiologist at \emph{CIRE Oc\'ean Indien, InVS} in Saint Denis, R\'eunion Island, France,  for their help and availability.

\bibliographystyle{elsarticle-num}
\bibliography{biblio_reseau}

\begin{thebibliography}{10}
\expandafter\ifx\csname url\endcsname\relax
  \def\url#1{\texttt{#1}}\fi
\expandafter\ifx\csname urlprefix\endcsname\relax\def\urlprefix{URL }\fi
\expandafter\ifx\csname href\endcsname\relax
  \def\href#1#2{#2} \def\path#1{#1}\fi

\bibitem{gratz_emerging_1999}
N.~G. Gratz, Emerging and resurging vector-borne diseases, Annual Review of
  Entomology 44~(1) (1999) 51--75.

\bibitem{Sutherst2004}
R.~W. Sutherst, Global change and human vulnerability to vector-borne diseases,
  Clinical Microbiology Reviews 17~(1) (2004) 136--173.

\bibitem{national2006The}
S.~Knobler, A.~Mahmoud, S.~Lemon, L.~Pray, The Impact of Globalization on
  Infectious Disease Emergence and Control: Exploring the Consequences and
  Opportunities, Workshop Summary - Forum on Microbial Threats, The National
  Academies Press, 2006.

\bibitem{Martens2000}
P.~Martens, L.~Hall, Malaria on the move: human population movement and malaria
  transmission., Emerging Infectious Diseases 6~(2) (2000) 103--109.

\bibitem{Bailey1980}
N.~T.~J. Bailey, Spatial models in the epidemiology of infectious diseases, in:
  Biological growth and spread ({P}roc. {C}onf., {H}eidelberg, 1979), Vol.~38
  of Lecture Notes in Biomath., Springer, Berlin, 1980, pp. 233--261.

\bibitem{Murray2002a}
J.~Murray, Mathematical Biology II: Spatial Models and Biomedical Applications,
  3rd Edition, Springer-Verlag, 2002.

\bibitem{Fitzgibbon2003}
W.~E. Fitzgibbon, M.~Langlais, A diffusive {S}.{I}.{S}.\ model describing the
  propagation of {F}.{I}.{V}, Commun. Appl. Anal. 7~(2-3) (2003) 387--403.

\bibitem{Levins1969}
R.~Levins, Some demographic and genetic consequences of environmental
  heterogeneity for biological control, Bull. Entomol. Soc. Am.~(15) (1969)
  237--240.

\bibitem{dushoff_effects_1995}
Dushoff, S.~Levin, The effects of population heterogeneity on disease
  invasion., Mathematical Biosciences 128~(1-2) (1995) 25--40.

\bibitem{Wang2004}
W.~Wang, X.-Q. Zhao, An epidemic model in a patchy environment, Mathematical
  Biosciences 190~(1) (2004) 97 -- 112.

\bibitem{lloyd_spatial_1996}
A.~L. Lloyd, R.~M. May, Spatial heterogeneity in epidemic models, Journal of
  Theoretical Biology 179~(1) (1996) 1--11.

\bibitem{Colizza2008}
V.~Colizza, A.~Vespignani, Epidemic modeling in metapopulation systems with
  heterogeneous coupling pattern: Theory and simulations, Journal of
  Theoretical Biology 251~(3) (2008) 450 -- 467.

\bibitem{Cosner2009}
C.~Cosner, J.~Beier, R.~Cantrell, D.~Impoinvil, L.~Kapitanski, M.~Potts,
  A.~Troyo, S.~Ruan, The effects of human movement on the persistence of
  vector-borne diseases, Journal of Theoretical Biology 258~(4) (2009) 550 --
  560.

\bibitem{Balcan2010}
D.~Balcan, B.~Gon{\c c}alves, H.~Hu, J.~J. Ramasco, V.~Colizza, V.~Vespignani,
  Modeling the spatial spread of infectious diseases: The global epidemic and
  mobility computational model, Journal of Computational Science 1~(3) (2010)
  132 -- 145.

\bibitem{LonginiJr.1988}
I.~M. Longini~Jr, A mathematical model for predicting the geographic spread of
  new infectious agents, Mathematical Biosciences 90~(1-2) (1988) 367 -- 383.

\bibitem{Rvachev1985}
L.~A. Rvachev, I.~M. Longini~Jr, A mathematical model for the global spread of
  influenza, Mathematical Biosciences 75~(1) (1985) 3 -- 22.

\bibitem{Van2003}
J.~Arino, P.~van~den Driessche, A multi-city epidemic model, Mathematical
  Population Studies 10 (2003) 175--193.

\bibitem{Arino2009}
J.~Arino, Diseases in metapopulations, in: Z.~Ma, Y.~Zhou, J.~Wu (Eds.),
  Modeling and Dynamics of Infectious Diseases, Vol.~11 of Series in
  Contemporary Applied Mathematics, World Scientific, 2009, pp. 65--123.

\bibitem{ArinoJune2005}
J.~Arino, J.~Davis, D.~Hartley, R.~Jordan, J.~Miller, J.~van~den Driessche, A
  multi-species epidemic model with spatial dynamics, Mathematical Medicine and
  Biology 22~(2) (June 2005) 129--142.

\bibitem{Menach2005}
A.~Menach, F.~E. McKenzie, A.~Flahault, D.~Smith, The unexpected importance of
  mosquito oviposition behaviour for malaria: non-productive larval habitats
  can be sources for malaria transmission, Malaria Journal 4~(1) (2005) 23.

\bibitem{Smith2004}
D.~L. Smith, J.~Dushoff, F.~E. McKenzie, The risk of a mosquito-borne
  infectionin a heterogeneous environment, PLoS Biol 2~(11) (2004) e368.

\bibitem{Auger2008}
P.~Auger, E.~Kouokam, G.~Sallet, M.~Tchuente, B.~Tsanou, The ross-macdonald
  model in a patchy environment, Mathematical Biosciences 216~(2) (2008) 123 --
  131.

\bibitem{Ross1910}
R.~Ross, The Prevention of Malaria by Ronald Ross, Murray, 1910.

\bibitem{L.W.Hackett09011958}
L.~W. Hackett, The epidemiology and control of malaria, Am J Trop Med Hyg 7~(5)
  (1958) 577--578.

\bibitem{Moulay2010}
D.~Moulay, M.~Aziz-Alaoui, M.~Cadivel, The chikungunya disease: Modeling,
  vector and transmission global dynamics, Mathematical Biosciences 229~(1)
  (2011) 50--63.

\bibitem{Gonzalez:2008fk}
M.~C. Gonz\'{a}lez, C.~A. Hidalgo, A.~Barab\'{a}si, Understanding individual
  human mobility patterns, Nature 453~(7196) (2008) 779--782.

\bibitem{Cui2006}
J.~Cui, Y.~Takeuchi, Y.~Saito, Spreading disease with transport-related
  infection, Journal of Theoretical Biology 239~(3) (2006) 376--390.

\bibitem{Bacaer2007}
N.~Baca{\"e}r, Approximation of the basic reproduction number r0 for
  {Vector-Borne} diseases with a periodic vector population, Bulletin of
  Mathematical Biology 69~(3) (2007) 1067--1091.

\bibitem{Delatte2009}
H.~Delatte, G.~Gimonneau, A.~Triboire, D.~Fontenille, Influence of temperature
  on immature development, survival, longevity, fecundity, and gonotrophic
  cycles of aedes albopictus, vector of chikungunya and dengue in the indian
  ocean., Journal of Medical Entomology 46 (2009) 33--41.

\bibitem{Dumont2010}
Y.~Dumont, F.~Chiroleu, Vector control for the chikungunya disease., Math
  Biosci Eng 7~(2) (2010) 313--45.

\bibitem{Moulay2012}
D.~Moulay, M.~Aziz-Alaoui, H.~Kwon, Optimal control of chikungunya disease:
  Larvae reduction, treatment and prevention, Mathematical Biosciences and
  Engineering 9~(2) (2012) 369--392.

\bibitem{Sattenspiel1988}
L.~Sattenspiel, C.~P. Simon, The spread and persistence of infectious diseases
  in structured populations, Mathematical Biosciences 90~(1-2) (1988) 341--366.

\bibitem{Vermillard}
G.~Vermillard, Le chikungunya : un virus, une maladie - a propos de
  l'{\'e}pid{\'e}mie 2005-2006 {\`a} l'ile de la r{\'e}union, Ph.D. thesis, UHP
  - Universit{\'e} Henri Poincar{\'e} (2009).

\bibitem{Nishida1993}
G.~Nishida, J.~Tenorio, What Bit Me? Identifying Hawaii's Stinging and Biting
  Insects and Their Kin., University of Hawaii Press, 1993.

\bibitem{Zongo2008}
P.~Zongo, Mod{\'e}lisation math{\'e}matique de la dynamique de transmission du
  paludisme, Phd Thesis, 2008.

\bibitem{Tsanou2011}
B.~Tsanou, Etude de quelques mod{\`e}les epid{\'e}miologiques de
  m{\'e}tapopulations : Application au paludisme et {\`a} la tuberculose, Ph.D.
  thesis, University Paul Verlaine of Metz and University of Yaound{\'e}
  (2011).

\bibitem{INSEE}
INSEE, \textit{Estimations carroy{\'e}es de population en 2007}, \url{
  http://www.insee.fr/fr/ppp/bases-de-donnees/donnees-detaillees/duicq/region.asp?reg=04}
  (02 2012).

\bibitem{OpenStreetMap}
Openstreetmap, \url{http://www.openstreetmap.org/} (02 2012).

\bibitem{Vazeille2007}
M.~Vazeille, S.~Moutailler, D.~Coudrier, C.~Rousseaux, H.~Khun, M.~Huerre,
  J.~Thiria, J.-S. Dehecq, D.~Fontenille, I.~Schuffenecker, P.~Despres, A.-B.
  Failloux, Two chikungunya isolates from the outbreak of la reunion (indian
  ocean) exhibit different patterns of infection in the mosquito, aedes
  albopictus, PLoS ONE 2~(11) (2007) e1168.

\end{thebibliography}
 
\cleardoublepage

\appendix
\section{}
\noindent Let  \hspace*{1cm}$V_{i\rightarrow}=\{ k\neq i \ : \ g_i m_{ki}>0\}  \mbox { and } V_{\rightarrow i}=\{ k\neq i \ / \ g_k m_{ik}>0\} $\\
sets of nodes that can be accessed \textit{directly} from city $i$ and nodes that have a \textit{direct} travel access to city i, respectively.
Let \vspace{-0,3cm}
\begin{align*}
&\!\mathcal{A}_{i\rightarrow}\!=\!\{ k\neq i  /  \!\exists (k_1, \dots,k_q)\mbox{ distincts} , \  \!m_{k_1 i}m_{k_2 k_1}\dots m_{k k_q}\!>\!0  \mbox{ and } g_ig_{k_1}\!\dots\! g_{k_q}\!>\!0\}\\
&\!\mathcal{A}_{\rightarrow i}\!=\!\{ k\neq i /  \! \exists (k_1, \dots,k_q) \mbox{ distincts} , \ \! m_{k_1 k}m_{k_2 k_1}\dots m_{i k_q}\!>\!0  \mbox{ and } g_kg_{k_1}\!\dots\! g_{k_q}\!>\!0\}
\end{align*}
sets of nodes that can be accessed from city i and  nodes that have an access to city i, respectively.

\paragraph{\textbf{Proof of theorem \ref{theogasDFE}}}
Assume that node $1$ is at the DFE (without loss of generality),
\textit{i.e.} ${I_H}_{k1}=0,$ $\forall k\!=\!1,...,n$ and ${I_m}_{1}=0$. From eq. \eqref{eq_t_c}, we have 
${d{I_H}_{11}}/{dt}=\sum_{k=1}^n r_{1k} {I_H}_{1k}.$
But node $i=1$ is at the DFE, \textit{i.e.}~$d{I_H}_{11}/dt=0$, and since $r_{1v}>0$,  $\forall v\in V_{1\rightarrow}$, then, ${I_H}_{1v}=0,$ $\forall v\in V_{1\rightarrow}$.
Consider now equation \eqref{eq_t_d} with $i=1$ and let $v\in V_{1\rightarrow}$, then $$\dfrac{d{I_H}_{1v}}{dt}={{\beta}_H}_v \dfrac{{I_m}_{v}}{{N_H}_v^p} {S_H}_{1v}.$$
As the system \eqref{eq_t} is at an equilibrium point, thus $d{I_H}_{1v}/{dt}=0$. However ${{\beta}_H}_1>0$ and ${S_H}_{1v}>0$ from proposition \ref{positivite}, then $Im_{v}=0$ for all $v\in V_{1\rightarrow}$.
Finally, we have to show that for all $v\in V_{1\rightarrow}$ and for all $k=1,...,n$, ${I_H}_{kv}=0$ \textit{i.e.}, all humans resident of node $k$ and present at node $v$  are not infected.
Consider equation \eqref{eq_t_h} for a node $v\in V_{1\rightarrow}$,then,
$$\dfrac{d {I_m}_{v}}{dt}=\displaystyle\sum_{j=1}^n {\beta_m}_v \dfrac{{S_m}_{v}{I_H}_{jv}}{{N_H}_i^p} -{d_m} {I_m}_{v}= {\beta_m}_v\dfrac{{S_m}_{v}}{{N_H}_i^p}\displaystyle\sum_{j=1}^n  {I_H}_{jv}={{\beta}_m}_v \dfrac{{S_m}_{v}}{{N_H}_v^p} {I_H}_v^p=0.$$
Since  ${{\beta}_m}_v >0$, and from proposition \ref{positivite}, ${S_m}_{v}>0$ then ${I_H}_v^p=0 $, \textit{i.e.}~$\sum_{k=1}^n {I_H}_{kv}=0$. It follows that for all $k=1,...,n$, ${I_H}_{kv}=0$, since  ${I_H}_{ij}\geq0$ for all $i,j=1,...,n$.
Thus, all adjacent nodes to node $i=1$ are the DFE equilibrium.
Moreover, by induction, we obtain that all nodes $v\in \mathcal{A}_{1\rightarrow}$ are at the DFE.

Assume now that $M^T$ is irreducible. From eq. \eqref{eq_t_d} with $j=1$, we have
${d{I_H}_{i1}}/{dt}= g_i m_{1i} {I_H}_{ii}.$
Since system is at the equilibrium $ d{I_H}_{i1}/{dt}=0$, beside $g_i m_{1i}>0$ for $i\in \mathcal{V}_{\rightarrow 1}$, then ${I_H}_{ii}=0$ for all $i\in \mathcal{V}_{\rightarrow 1}$. Let $v\in \mathcal{V}_{\rightarrow 1}$, from equation \eqref{eq_t_c}, we have
$$\dfrac{{dI_H}_{vv}}{dt}=\sum_{k=1}^n r_{vk} {I_H}_{vk}+{\beta_H}_i\dfrac{{I_m}_{v}}{{N_H}_v^p}{S_H}_{vv}=0,$$
then $\displaystyle\sum_{k=1}^n r_{vk} {I_H}_{vk}=0$ and ${\beta_H}_i\dfrac{{I_m}_{v}}{{N_H}_v^p}{S_H}_{vv}=0$ 
and   ${S_H}_{vv}>0,$  from proposition \ref{positivite}, thus ${I_m}_{v}=0$ for all $v\in \mathcal{V}_{\rightarrow 1}$.
Finally, consider equation \eqref{eq_t_h} for  $v\in \mathcal{V}_{\rightarrow 1}$, we have
$$\dfrac{d {I_m}_{v}}{dt}={{\beta}_m}_v \dfrac{{S_m}_{v}}{{N_H}_v^p} {I_H}_v^p=0,$$
then ${I_H}_v^p=0$, \textit{i.e.} $\displaystyle \sum_{k=1}^n {I_H}_{kv}=0$  $\forall  k=1,..., n$. Thus all nodes are at  the DFE.

%
%

\paragraph{\textbf{Proof of theorem \ref{theogasEnd}} }
\begin{proof}
Assume that the disease is endemic in node $i=1$, \textit{i.e.}  there exists $q\in{1,...,n}$ such that ${I_H}_{q1}>0$. 
We have to show that, in this case,  ${I_H}_{11}>0$.
If $q = 1$ then we can proceed. Assume that $q\neq 1$. 
Assume by contradiction  that ${I_H}_{11}=0$. Since the system is at the equilibrium, from \eqref{eq_t_c}, we have:
$$0= \dfrac{d{I_H}_{11}}{dt}=+\sum_{k=1}^n r_{1k} {I_H}_{1k}+{{\beta}_H}_1\dfrac{{I_m}_{1}}{{N_H}_1^p} {S_H}_{11}.$$
Beside ${\beta_H}_i>0 $ and ${S_H}_{11}>0$ then ${I_m}_{1}=0$. Thus, from equation \eqref{eq_t_h}, we have
$$ 0=\dfrac{d{I_m}_{1}}{dt}=\displaystyle\sum_{j=1}^n {\beta_m}_i \dfrac{{I_H}_{j1}}{{N_H}_i^p}{S_m}_{1}.$$
Beside ${\beta_m}_i>0 $ and ${S_m}_i>$ then ${I_H}_{j1}=0$ for all $ j=1,...n$, which is a contradiction. We have ${I_H}_{11}>0$, if the disease is endemic in node $1$, which we now assume.

Consider equation \eqref{eq_t_d} with $i=1$ and $j\neq i$. Assume ${I_H}_{ij}=0$. Since the system is at equilibrium, we have
$$ 0=\dfrac{d{I_H}_{ij}}{dt}=g_1 m_{j1} {I_H}_{11}+ {{\beta}_H}_j\dfrac{Im_{j}}{{N_H}_j^p} {S_H}_{ij}.$$
If $j \in \mathcal{V}_{1 \rightarrow }$ then $g_1 m_{j1}>0$ thus ${I_H}_{ii}=0$, which is a contradiction. Finally ${I_H}_{1j}>0$ for all $j \in \mathcal{V}_{1 \rightarrow }$, which means that the disease is endemic. Particularly, we deduce that ${I_H}_{jj}>0$ from the first part of the proof. By continuing, we can show that the disease is endemic in all nodes $j\in \mathcal{A}_{1 \rightarrow }$, that is to say, nodes reachable from node $1$.
\end{proof}

\end{document}